\DeclareMathOperator*{\abs}{abs}
\newtheorem{mydef}{Definition}
\newtheorem{mytheorem}{Theorem}
\newtheorem{mylemma}{Lemma}
\newtheorem{myassumption}{Assumption}
\newcommand*{\lwhere}{\ \ | \ \ }
\newcommand\nmachines{8}
\newcommand\aparamregression{-0.62280}
\newcommand\bparamregression{2008}
\newcommand\minusbdivaparamregression{3223.49}
\newcommand\singlethreadscoreofCeleronNFourOneZeroZero{1012}
\newcommand\singlethreadscoreofXeonEFiveFourFourZero{1219}
\begin{document}

\title{On the Fair Comparison of Optimization Algorithms in Different Machines}

	\author{
		Etor Arza, BCAM - Basque Center for Applied Mathematics\\
		Josu Ceberio, University of the Basque Country UPV/EHU\\
		Ekhiñe Irurozki, Télécom Paris\\
		Aritz Pérez, BCAM - Basque Center for Applied Mathematics
		}
	
	\maketitle

\textit{Keywords: }Algorithms, Optimization, Benchmarking, Statistical Tests

	\begin{abstract}
		An experimental comparison of two or more optimization algorithms requires the same computational resources to be assigned to each algorithm.
		When a maximum runtime is set as the stopping criterion, all algorithms need to be executed in the same machine if they are to use the same resources.
		Unfortunately, the implementation code of the algorithms is not always available, which means that running the algorithms to be compared in the same machine is not always possible.
		And even if they are available, some optimization algorithms might be costly to run, such as training large neural-networks in the cloud.
		
		In this paper, we consider the following problem: how do we compare the performance of a new optimization algorithm B with a known algorithm A in the literature if we only have the results (the objective values) and the runtime in each instance of algorithm A?
		Particularly, we present a methodology that enables a statistical analysis of the performance of algorithms executed in different machines.
		The proposed methodology has two parts.
		First, we propose a model that, given the runtime of an algorithm in a machine, estimates the runtime of the same algorithm in another machine.
		This model can be adjusted so that the probability of estimating a runtime longer than what it should be is arbitrarily low.
		Second, we introduce an adaptation of the one-sided sign test that uses a modified \textit{p}-value and takes into account that probability.
		Such adaptation avoids increasing the probability of type I error associated with executing algorithms A and B in different machines.
	\end{abstract}

	\section{Introduction}

	Finding an appropriate reference point for evaluating the performance of an optimization algorithm is not trivial.
	The key question is: when can we say that the performance of an optimization algorithm is good?
	The answer depends on how we define good performance.
	A possible solution is to compare the performance of several algorithms in the same problem.
	This comparison can show that some algorithms perform better than others on average. 
	Precisely, this is what a comparative study among different algorithms tries to accomplish: analyze the relative performance of a set of algorithms.

	However, a different use of computational resources can lead to observing false differences between the performance of the algorithms.
	Appendix~\ref{appendix:extra_execution_time_typeI_error} presents an illustrative example.
	We say that two algorithms use the same amount of computational resources when they both take the same time to complete in the same machine.
	Usually, this is achieved by executing both algorithms on the same machine and setting a common maximum runtime as stopping criterion.

	Over the last few decades, the computational capabilities of computers have significantly increased~\cite{nordhaus_2007}.
	This means that code executed a decade ago is expected to run faster on current hardware.
	Additionally, in some fields of optimization such as natural language processing, good performing models like the GTP-3~\cite{brownLanguageModelsAre2020} are currently being trained with a lot of computation power.
	Unfortunately, training these types of models is often economically unviable for most researchers~\cite{sharir2020cost}.
	Furthermore, as stated by Hutson~\cite{doi:10.1126/science.359.6377.725}, many works do not include the code to reproduce the experiments.
	These and other issues limit the reproducibility of the comparison of algorithms, as it is not always possible to execute all the algorithms being compared in the same machine.

	One way to overcome this limitation is to adjust the runtime of one of the algorithms being compared such that the algorithm executed on the slower machine is compensated with extra computation time.
	Let us now look at a practical example.
	Let us imagine that a researcher reads a paper in which algorithm $A$ is executed in machine $M_1$, taking time $t_1$.
	Now, the researcher wants to compare a new algorithm, $B$, with $A$, but has no access to algorithm $A$ nor to machine $M_1$.
	Instead, the researcher only has access to machine $M_2$ and algorithm $B$.
	In this case, he/she can execute algorithm $B$ in machine $M_2$ for time $t_2$.
	The runtime $t_2$ needs to be set in such a way that both algorithms are given the same computational resources.
	To achieve this, $t_2$ needs to be equal to the equivalent runtime: the time it takes to replicate in machine $M_2$ the exact optimization process that was carried out in machine $M_1$ (with algorithm $A$).

	The exact equivalent runtime $t_2$ can be obtained if the exact optimization process that was carried out in machine $M_1$ is replicated in machine $M_2$.
	This implies executing algorithm $A$ in machine $M_2$, which defeats the purpose of using an equivalent runtime.
	Fortunately, an estimation of the equivalent runtime $\hat{t}_2$ can be used instead.
	The estimation is carried out taking into account the computational capabilities of machines $M_1$ and $M_2$, denoted as $s_1$ and $s_2$ in the rest of the paper.
	Table~\ref{table:summary_terms_paper} offers a brief overview of the terms used in the paper.
	
	\begin{table}[]
		\centering
		\begin{footnotesize}
			\caption*{\small \textbf{A summary of the notation of the paper}}
			\begin{tabular}{l|c|l}
				Name                                                                                    &      \scriptsize Notation      & Explanation                                                   \\ \hline
				\multirow{3}{3.4cm}{Optimization algorithm $A$}                                         &     \multirow{3}{1em}{$A$}     & The optimization algorithm that \underline{is not executed} in \\
				                                                                                        &                                & the comparison. Instead, already published results             \\
				                                                                                        &                                & of this algorithm are used in the comparison.                  \\ \hline
				\multirow{2}{3.5cm}{Optimization algorithm $B$}                                         &     \multirow{2}{1em}{$B$}     & The optimization algorithm that \underline{is executed} to     \\
				                                                                                        &                                & obtain the results to be compared.                             \\ \hline
				\multirow{2}{3.3cm}{Machine $M_1$}                                                      &    \multirow{2}{1em}{$M_1$}    & The machine in which algorithm $A$ was executed.               \\
				                                                                                        &                                & We have \underline{no access} to this machine.                 \\ \hline
				\multirow{2}{3.3cm}{Machine $M_2$}                                                      &    \multirow{2}{1em}{$M_2$}    & The machine in which algorithm $B$ is executed to              \\
				                                                                                        &                                & obtain the results used in the comparison.                     \\ \hline
				\multirow{2}{3.3cm}{The score of machine $M_1$}                                         &    \multirow{2}{1em}{$s_1$}    & A measure proportional to the computational                    \\
				                                                                                        &                                & capability  of machine $M_1$.                                  \\ \hline
				\multirow{2}{3.3cm}{The score of machine $M_2$}                                         &    \multirow{2}{1em}{$s_2$}    & A measure proportional to the computational                    \\
				                                                                                        &                                & capability  of machine $M_2$.                                  \\ \hline
				\multirow{3}{3.3cm}{The runtime of $A$ in machine $M_1$}                                &    \multirow{3}{1em}{$t_1$}    & The stopping criterion (in terms of maximum                    \\
				                                                                                        &                                & runtime) that was used in the execution of                     \\
				                                                                                        &                                & algorithm $A$ in machine $M_1$.                                \\ \hline
				\multirow{3}{3.3cm}{The equivalent runtime of $A$ in machine $M_2$}                     &    \multirow{3}{1em}{$t_2$}    & The time it takes to replicate in machine $M_2$ the            \\
				                                                                                        &                                & exact optimization process (with algorithm $A$)                \\
				                                                                                        &                                & that took time $t_1$ in machine $M_1$.                         \\ \hline
				\multirow{3}{3.3cm}{The estimated equivalent runtime of algorithm $A$ in machine $M_2$} & \multirow{3}{1em}{$\hat{t}_2$} & An estimation of the equivalent runtime $t_2$.                 \\
				                                                                                        &                                & This value is used as the stopping criterion of                    \\
				                                                                                        &                                & algorithm $B$, which is executed in machine $M_2$.             \\ \hline
			\end{tabular}
		\end{footnotesize}
		\caption
		{
			A summary of the notation and terms considered in this paper.
		}
		\label{table:summary_terms_paper}
		
	\end{table}

	\textbf{Related work:} 
	
	Dominguez et al.~\cite{dominguez_methodology_2012} proposed adjusting the runtimes of the algorithms by assigning a shorter CPU runtime to the algorithms executed in machines with faster CPUs, thus making the execution of algorithms in different machines comparable.
	To estimate the CPU capabilities of each machine, they proposed using the \textit{dhrystone2} score~\cite{weicker_dhrystone_1988}.

	The methodology introduced by Dominguez et al. is limited in three ways.
	Firstly, there is neither theoretical nor experimental justification for the prediction model of the equivalent runtime.
	Secondly, their prediction model is fixed and cannot be adjusted control the probability of type I error.
	And thirdly, their methodology does not take into account the probability of predicting an equivalent runtime longer or shorter than the true equivalent runtime, and thus, may introduce undesired biases to the comparison of the performance of algorithms.
	Without a corrected statistical model, it is not possible to take into account this probability and control the probability of type I error.

	An increase in the probability of type I error when deciding if algorithm $B$ is better than $A$ can be problematic.
	In the context of performance comparison of optimization algorithms (with null-hypothesis statistical tests), a type I error is defined as finding a statistically significant difference in the performance of the algorithms, when in reality, there is none.
	Making a type II error, on the other hand, means not finding a statistically significant difference in the performance of the algorithms, when in reality, the performances are different.
	In this context, making a type II error is preferred to a type I error: falsely concluding a nonexistent difference in the performance of the algorithms is worse than not finding a statistically significant difference in the performance of the algorithms.

	In fact, failing to reject the null hypothesis does not imply evidence in favor of the null hypothesis. 
	Instead, it only shows a lack of evidence against it.
	In the context of algorithm benchmarking, failing to reject the null hypothesis does not mean that the performance of the algorithms is the same.
	When the null hypothesis is not rejected, the correct conclusion is that there is not enough evidence to show a statistically significant difference between the performance of the algorithms. 
	Therefore, a type II error just means that additional experimentation is needed to verify an existing difference in the performance of the algorithms, which is not an erroneous conclusion in itself.

	\textbf{Proposed methodology:}
	In this paper, inspired by the work of Dominguez et al.~\cite{dominguez_methodology_2012}, we propose a methodology to statistically assess the difference in the performance of optimization algorithms executed in different machines.
	Specifically, the proposed methodology can be used to show that an algorithm $B$ performs statistically significantly better than another algorithm $A$, without executing $A$ and, instead, using the available results of $A$ in terms of the objective function value and the runtime in each instance.
	To that end, we propose a conservative methodology in which the probability of giving algorithm $B$ an unfairly longer time is kept in check by i) proposing a two-parameter estimation\footnote{The estimation proposed in this paper is intentionally conservative and tends to estimate shorter than equivalent runtimes. With this we are able to keep the probability of type I error in check, while increasing the probability of type II error. In the context of algorithm benchmarking, a type I error is worse than a type II error.} of the equivalent runtime with an arbitrarily low probability of estimating an unfairly longer runtime and ii) by modifying the one-sided sign test~\cite{conover1980practical} so that it takes this probability into account.

	Alongside this paper we present a tutorial on how to apply the proposed methodology.
	This tutorial and the code of all the experimentation is available in our \href{https://github.com/EtorArza/RTDHW}{GitHub repository}\footnote{Repository available in  \href{https://github.com/EtorArza/RTDHW}{https://github.com/EtorArza/RTDHW}.}.
	Besides, we also give two examples of how the methodology is applied in this paper.
	It is noteworthy that applying the proposed methodology to compare algorithms in different machines does not involve executing any additional code.

	The rest of the paper is organized as follows: 
	The next section describes and motivates a two-parameter model proposed to estimate the equivalent time.
	Section~\ref{section:modifying_sign_test} presents the modifications made to the sign test to overcome the limitations introduced by the execution of the algorithms in different machines.
	Afterward, in Section~\ref{section:case_study} we introduce two examples in which we apply the proposed methodology.
	Finally, Section~\ref{section:conclusion} concludes the paper and proposes some research lines for future investigation.

	\section{The estimation model of the equivalent runtime}
	\label{section:norm_exec_time}


	Given i) an optimization algorithm, ii) a machine, iii) a problem instance, iv) a stopping criterion and v) a random seed number, executing the optimization algorithm will produce a specific sequence of computational instructions.
	This sequence is completely determined by these five parameters.
	We call this sequence of instructions that is reproducible in any machine the \textit{optimization process}.
	By recording the optimization process carried out with these parameters, we can later reproduce the exact optimization process in another machine.
	Notice that reproducing the optimization process will take a different time in each machine, even though the final result is the same (because the executed sequence of instructions is the same).
	We say that the times required to replicate the same optimization process in different machines are equivalent.

	\vspace{0.15cm}
	\begin{mydef} (Optimization process) \\
	Let $M$ be a machine, $A$ an optimization algorithm, $i$ a problem instance, $t_1$ a stopping criterion, and $r$ a positive integer (the seed for the random number generator).
	We define the optimization process $\rho(M,A,i,t_1,r)$ as the sequence of computational instructions carried out when optimizing instance $i$ with algorithm $A$ and seed $r$ in machine $M_1$ with stopping criterion $t_1$.
	\end{mydef}

	The aim is to compare algorithm $A$ executed in machine $M_1$, with algorithm $B$, executed in machine $M_2$.
	A fair comparison can be carried out by estimating the time it takes to replicate $\rho(M_1,A,i,t_1,r)$ in machine $M_2$ and using the estimated value as the stopping criterion for algorithm $B$ in machine $M_2$. 
	We will sometimes denote the optimization process $\rho(M_1,A,i,t_1,r)$ as $\rho$ for the sake of brevity.
	
	\vspace{0.15cm}
	\begin{mydef} 
	\label{definition:runtime_of_an_optimization_process}
	(Runtime of an optimization process) \\ 	
	Let $M$ be a machine and $\rho$ an optimization process.
	We define the runtime of $\rho$ in $M$, denoted as $t(M,\rho)$, as the time it takes to carry out the optimization process $\rho$ in machine $M$.
	\end{mydef}

	Considering the above definitions, it follows that, $t(M_1,\rho) = t_1$.

	\vspace{0.15cm}
	\begin{mydef} 
	\label{definition:equivalent_runtime}
	(Equivalent runtime) \\
	Let $M_1,M_2$ be two machines, $\rho$ an optimization process and $t(M_1,\rho)$ and $t(M_2,\rho)$ the times required to run $\rho$ in $M_1$ and $M_2$ respectively.
	Then, we say that $t(M_2,\rho)$ is the equivalent runtime of $t(M_1,\rho)$ for machine $M_2$.
	\end{mydef}

	From here on, we will denote $t(M_2,\rho)$, the equivalent runtime of $t_1 = t(M_1,\rho)$ in machine $M_2$, as $t_2$.
	Given $t_1$ (the runtime of optimization process $\rho$ in a machine $M_1$), in the following, we will propose a model to estimate $t_2$ (the equivalent runtime in another machine $M_2$).

	\vspace{0.15cm}
	\begin{myassumption} 
	\label{assumption:k_constant_ratio}
	(Constant ratio of the runtime of two optimization processes) \\
	Let $\rho, \rho'$ be two optimization processes.
	Then, we assume that:
	\begin{equation*}
	\frac{t(M_2,\rho)}{t(M_2,\rho')} \approx \frac{t(M_1,\rho)}{t(M_1,\rho')} 
	\end{equation*}
	for any two machines  $M_1$ and $M_2$.
	\end{myassumption}
	\vspace{0.15cm}

	We assume that the ratio of the runtime of two different optimization processes is constant with respect to the machine in which it is measured.
	In Appendix~\ref{appendix:constant_ratio_tasks}, we justify why this assumption is reasonable.
	This assumption is critical to the estimation model that will later be proposed.
	By using this assumption in the model, a prediction error is introduced.
	Therefore, we will later propose a correction to the model to control this prediction error.

	Based on this assumption, we propose a model to estimate the equivalent runtime of an optimization process in a machine, given its runtime in another machine, as well as the scores (relative to the computational capabilities) of both machines.
	Notice that in Assumption~\ref{assumption:k_constant_ratio}, we use a reference optimization process $\rho'$ to estimate the equivalent runtime of the optimization process $\rho$.
	Any optimization process $\rho'$ can be used as a reference.
	In the following, we will define an optimization process $\rho'$ whose runtime we will be able to estimate with the scores $s_1$ and $s_2$ of the machines.
	This will allow the estimation of the equivalent runtime $t_2$ without executing any reference optimization processes, as shown in Figure~\ref{fig:estimation_diagram}.
	
	\begin{figure}
		\centering
		\caption*{$\hspace{1.9em}$Diagram of the estimation of the equivalent runtime}
		\includegraphics[width=0.7\linewidth]{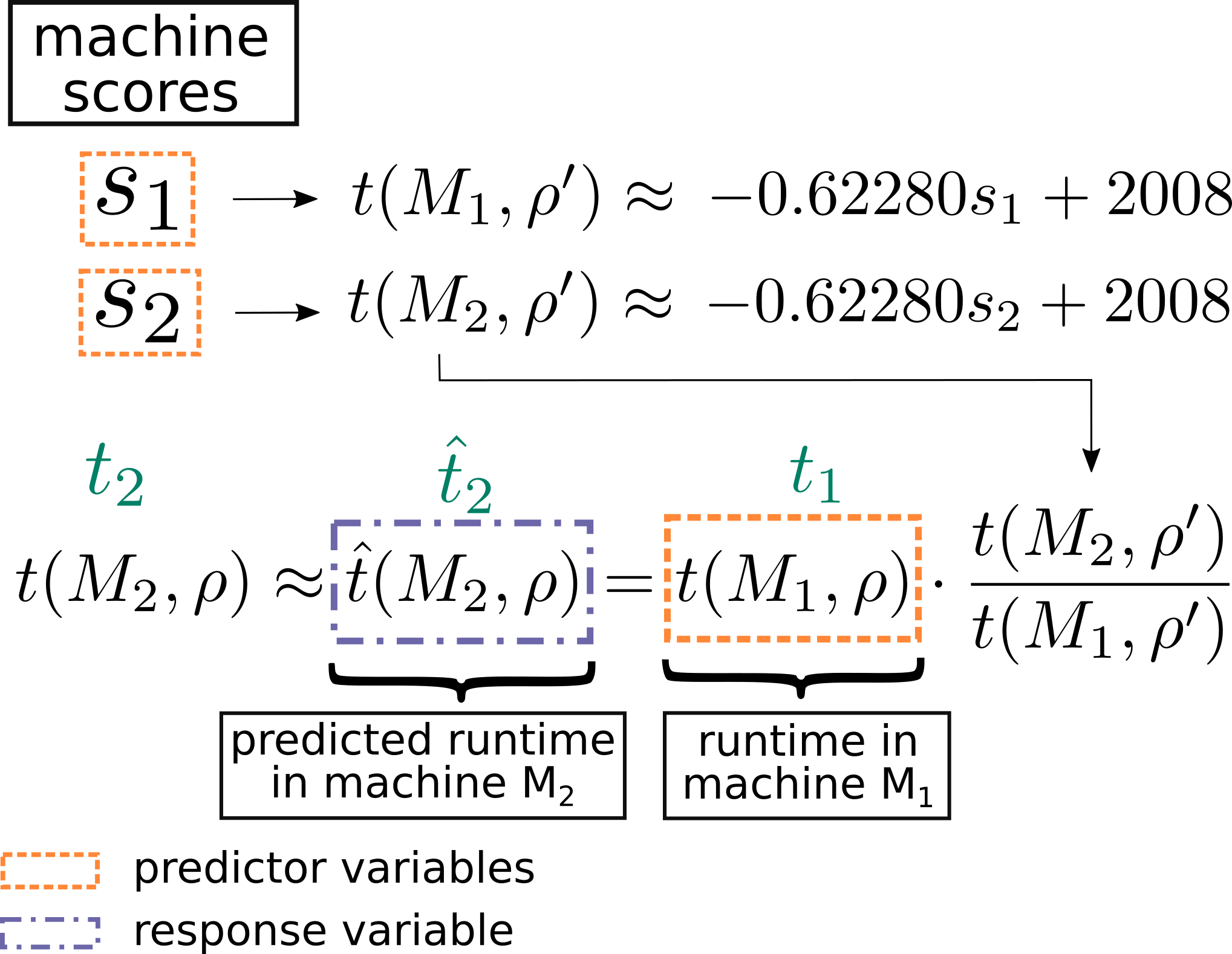}
		\vspace{0.4cm}
		\caption{
			Estimated equivalent runtime of $\rho$ in machine $M_2$ (the response variable $\hat{t}(M_2,\rho)$).
			The estimation is carried out with three predictor variables: the machine scores $s_1$ and $s_2$ and $t(M_1,\rho)$.
		}
		\label{fig:estimation_diagram}
	\end{figure}

	Let us now define the optimization process $\rho'$, whose runtime can be estimated.
	Recall that an optimization process is just a sequence of computational instructions that can be reproduced in any machine.
	Aiming to obtain a more diverse sequence of computational instructions, we define the optimization process $\rho'$ as the computational instructions generated by consecutively executing four different optimization algorithms in 16 problem instances.
	Each of the 64 executions involves solving a permutation problem with an optimization algorithm, with a stopping criterion of a maximum of $2\cdot10^6$ evaluations (see Appendix~\ref{appendix:experimentation_execution} for details on the optimization problems and algorithms used).

	\begin{figure}
		\centering
		\caption*{PassMark single-thread score and the runtime $\rho'$}
		\includegraphics[width=0.8\linewidth]{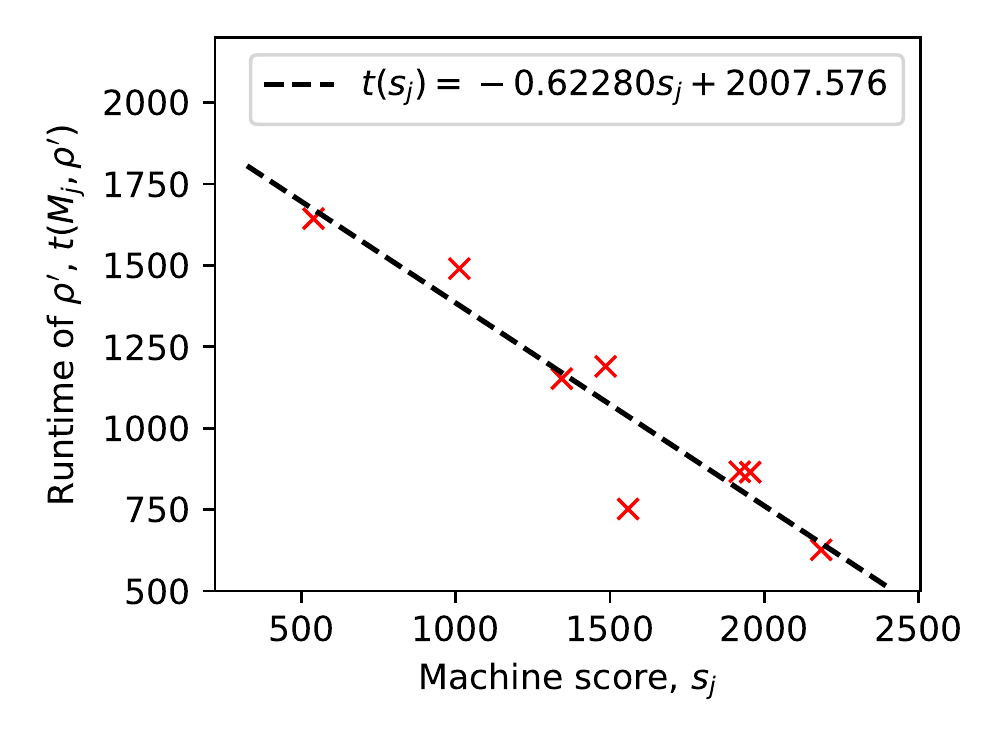}
		\caption{
			The runtime of $\rho'$, the optimization process used as a reference to define the regression model.
			Every point represents a different CPU, each with a different machine score and runtime of $\rho'$ in this machine.
		}
		\label{fig:passmark_base_algorithm_regression}
	\end{figure}

	The runtime of the optimization process $\rho'$ in a machine can be estimated with its machine score. 
	In this paper, we measure the score of a machine (its computational capability) in terms of its \href{https://github.com/EtorArza/RTDHW/blob/master/cpu_scores.md}{PassMark} single-thread CPU score
	\footnote{
		The PassMark CPU score is one of the most popular CPU benchmark scores, with over 3500 CPUs listed on their website.
		In this paper, we use the single-thread score of this benchmark.
		A higher value of the score is associated with a better relative performance of the CPU.
		The PassMark scores change over time, as new CPUs and consumer demand for computation evolves.
		The scores considered in this paper can be looked up in the file    \href{https://github.com/EtorArza/RTDHW/blob/master/cpu_scores.md}{cpu\_scores.md} in our GitHub repo.
	},
	although adapting the proposed methodology to other benchmarks is also possible.
	The advantage of using the PassMark score is that the PassMark website has a large collection of CPUs with their scores.
	In Figure~\ref{fig:passmark_base_algorithm_regression}, we show the machine score and runtime of $\rho'$ for each of the $\nmachines$ machines considered in this paper (the list of machines is available in Appendix~\ref{appendix:experimentation_execution}).
	We see that the relationship between the runtime of $\rho'$ and the machine score is linear.
	Every point in Figure~\ref{fig:passmark_base_algorithm_regression} represents a different machine. 

	Based on the figure shown, we infer that i) linear regression is suitable to model the runtime of $\rho'$ with respect to the machine score, and ii) the machine score has a good correlation with the runtime of $\rho'$.
	Note that using a linear estimation of $\rho'$ has some limitations in its applicability that will be addressed in Section~\ref{section:limitations}.
	With this in mind, we define the estimation of the runtime in a machine:
	
	\vspace{0.15cm}
	\begin{mydef} 
		\label{definition:linear_regression_passmark}
		(Prediction of the runtime of $\rho'$ in a machine)\\
		Let $M_j$ be a machine and $s_j$ its machine score. Then, the runtime of $\rho'$ in  $M_j$ is modeled as
		\begin{equation*}
		t(M_j,\rho') \approx  \aparamregression s_j + \bparamregression
		\end{equation*}
		where $s_j$ is the score of machine $M_j$.
	\end{mydef}

	Considering together Assumption~\ref{assumption:k_constant_ratio} and Definition~\ref{definition:linear_regression_passmark}, the equivalent runtime can be estimated as:
	
	\begin{equation}
		\label{equation:centered_estimator_runtime}
		t_2 \approx  \hat{t}_2 = t_1 \cdot \dfrac{\aparamregression s_2 +\bparamregression}{\aparamregression s_1 + \bparamregression} = t_1 \cdot \dfrac{\minusbdivaparamregression - s_2}{\minusbdivaparamregression - s_1}
	\end{equation}
	
	where $s_1$ and $s_2$ are the PassMark single-thread scores of the CPUs on machines $M_1,M_2$, respectively.
	Due to the approximation errors in Assumption~\ref{assumption:k_constant_ratio} and Definition~\ref{definition:linear_regression_passmark}, the estimated equivalent runtime $\hat{t}_2 = \hat{t}(M_2,\rho)$ will often differ from the true equivalent runtime value $t_2  = t(M_2,\rho)$.
	This means that when using the estimated equivalent runtime as the stopping criterion for algorithm $B$, sometimes, the runtime will be longer or shorter than the runtime used by algorithm A.
	
	\subsection{Controlling the probability of predicting a runtime longer than the true equivalent runtime}
	
	To statistically assess the uncertainty associated to the comparison of the performance of algorithms A and B, in this methodology, we propose using a one-sided statistical test.
	Under this test, the alternative hypothesis states that the performance of algorithm B is better than the performance of algorithm A.
	As a result, a type I error (erroneously finding a statistically significant difference in the performance of A and B) can only be made when algorithm B performs better than A.

	When a shorter runtime is estimated, algorithm $B$ has an ``unfairly'' shorter stopping criterion for the optimization.
	This implies that the measured performance of $B$ will be worse than or equal to the performance that would have been measured if the actual equivalent runtime were used.
	Consequently, taking into account the one-sided nature of the test, estimating a lower than actual runtime will not increase the probability of type I error (estimating a lower than actual runtime can never help algorithm $B$ perform better than algorithm $A$).
	It might, however, increase the probability of type II error.

%

	As mentioned in the Introduction, making a type II error is better than making a type I error when comparing algorithm performance. 
	This is because, in this context, it is better to miss evidence that can adequately discriminate between two algorithms than to observe a false difference. 
	For example, let us assume that someone publishes algorithm A with a certain performance.
	Now let us assume that a researcher proposes an algorithm B that is a variation of algorithm A.
	If a type II error is made, then B is actually better than A, but the researcher is not able to find enough evidence to support this, which is not in itself an erroneous conclusion.
	However, in a type I error, algorithm B is actually worse or equal to A but the researcher incorrectly identified algorithm B as the better algorithm, and this can be more detrimental to scientific progress.

	To avoid drawing erroneous conclusions, we present a modification to Equation~\eqref{equation:centered_estimator_runtime} so that the probability of estimating a longer time than the actual equivalent runtime stays under a percentage chosen by the user.
	We reformulate the unbiased estimator shown in Equation~\eqref{equation:centered_estimator_runtime} to reduce the probability of estimating a runtime longer than the true equivalent runtime.
	The new biased estimator is defined by multiplying the unbiased estimator with a correction parameter $\gamma\in(0,1]$:

	\vspace{0.15cm}
	\begin{mydef} (Estimation of the equivalent runtime in machine $M_2$)\\
	\label{definition:estimation_execution_time}
	Let $\rho$ be an optimization process, $M_1,M_2$ two machines and $t_1$ the runtime of $\rho$ in machine $M_1$.
	We compute $\hat{t}_2$, the estimate equivalent runtime of $\rho$ for machine $M_2$ as:
	
	\begin{equation*}
	t_2 = t(M_2,\rho) \approx  \hat{t}_2 = t_1 \cdot \dfrac{\minusbdivaparamregression - s_2}{\minusbdivaparamregression - s_1} \cdot \gamma
	\end{equation*}
	
	\end{mydef}
	
	By adjusting $\gamma$, the probability of estimating a longer runtime than the equivalent runtime, $\mathcal{P}[\hat{t}_2 > t_2]$, can be reduced.
	However, adjusting $\gamma$ implies that on average, a shorter runtime is predicted.
	With $\gamma=1.0$, the original, unbiased estimator is obtained.
	A lower value of $\gamma$ is associated to a lower probability of estimating a runtime longer than the true runtime.
	Specifically, the parameter $\gamma$ is equal to $\mathbb{E}[\frac{\hat{t}_2}{t_2}]$: how much shorter the estimated equivalent runtime is than the true equivalent runtime on average.
	For example, when $\gamma = 0.5$, then the equivalent runtime used will be half of the equivalent runtime predicted by the unbiased estimator.

	We estimated the relationship between $\gamma$ and $\mathcal{P}[\hat{t}_2 > t_2]$ and we show the result in Figure~\ref{figure:correctioncoefficienttradeoff}.
	We computed this probability empirically in a cross-validation setting.
	The exact procedure carried out to generate this figure is available in the file \href{https://github.com/EtorArza/RTDHW/blob/master/processing_results/show_linear.py}{show\_linear.py} in our GitHub \href{https://github.com/EtorArza/RTDHW}{repository}.
	In the following, we give additional details on the process carried out to compute the relationship between $\gamma$ and $\mathcal{P}[\hat{t}_2 > t_2]$ shown in the figure.

	The pseudocode of the following process is shown in Algorithm~\ref{algo:compute_prob_forall_gamma}.	
	Given a value of $\gamma$, we start by iterating over all the optimization processes and every possible pair of CPUs (Lines~3-5) and we leave them out of the training data (Lines~6-7).
	Then, we fit the linear regression in Definition~\ref{definition:linear_regression_passmark} and Equation~\eqref{equation:centered_estimator_runtime}, but only with the CPUs and optimization processes in the training data (Line~8).
	The runtime of the left out optimization process in the machine with cpu1 is $t_1$ (Line~11).
	Now, we predict the equivalent runtime of the optimization process left out in the machine with cpu2 with the formula $\hat{t_2} \gets  t_1 \cdot \frac{\alpha - s_2}{\alpha - s_1} \cdot \gamma$, where $\alpha$ was fitted with the training data (Line~12).
	Finally, we measure the proportion of times in which the predicted equivalent runtime for the machine with cpu2, $\hat{t_2}$, was longer than the runtime of the optimization process in that machine, $t_2$.

	\begin{algorithm}[t]
		\begin{footnotesize}
			\DontPrintSemicolon 
			\caption{Compute the probability of estimating a runtime longer than the true equivalent runtime given $\gamma$ in a cross-validation setting}
			\label{algo:compute_prob_forall_gamma}
			
			\KwIn{\ \\
				cpu\_list: The list of all the CPUs used to fit the linear regression.\\
				\noindent process\_list: The list of all the optimization processes.\\
				\noindent $\gamma$: The correction coefficient.\\
			}
			\KwOut{\ \\
				$\mathcal{P}[\hat{t}_2 > t_2]$: The probability of estimating a runtime longer than the true equivalent runtime for the given $\gamma$.
			}
			
			\SetAlgoLined
			\vspace{0.25cm}

			total $\gets 0$\;
			longer\_runtime\_predicted $\gets 0$\;

			\ForAll{\normalfont test\_process \textbf{in} process\_list}{
				\ForAll{\normalfont cpu1 \textbf{in} cpu\_list}{
					\ForAll{\normalfont cpu2 \textbf{in} cpu\_list $\setminus$ \{cpu1\}}{
						train\_cpus $\gets$  cpu\_list $\setminus$ \{cpu1, cpu2\}\;
						train\_processes $\gets$ process\_list $\setminus$ \{test\_process\} \;
						fit the linear regression in Definition~\ref{definition:linear_regression_passmark} and Equation~\eqref{equation:centered_estimator_runtime} with the cpu scores of train\_cpus and the runtimes of the optimization processes in train\_processes when executed in train\_cpus.\;
						$s_1 \gets$ cpu score of cpu1\; 
						$s_2 \gets$ cpu score of cpu2\;
						$t_1 \gets$ runtime of test\_process in the machine with cpu1\;
						$\hat{t_2} \gets  t_1 \cdot \frac{\alpha - s_2}{\alpha - s_1} \cdot \gamma$, where $\alpha$ was adjusted in Line~8.\;
						total $\gets$ total $+ 1$\;						
						\If{$\hat{t_2} > t_2$}{						longer\_runtime\_predicted $\gets$ longer\_runtime\_predicted $+ 1$\;						}

					}
				}			
			}

			\textbf{return} $\dfrac{ \text{longer\_runtime\_predicted}}{\text{total}}$ \;
			
		\end{footnotesize}    
	\end{algorithm}

	\begin{figure}
		\centering
		\caption*{$\hspace{1em}$Estimated runtime and the correction parameter $\gamma$}
		\includegraphics[width=0.7\linewidth]{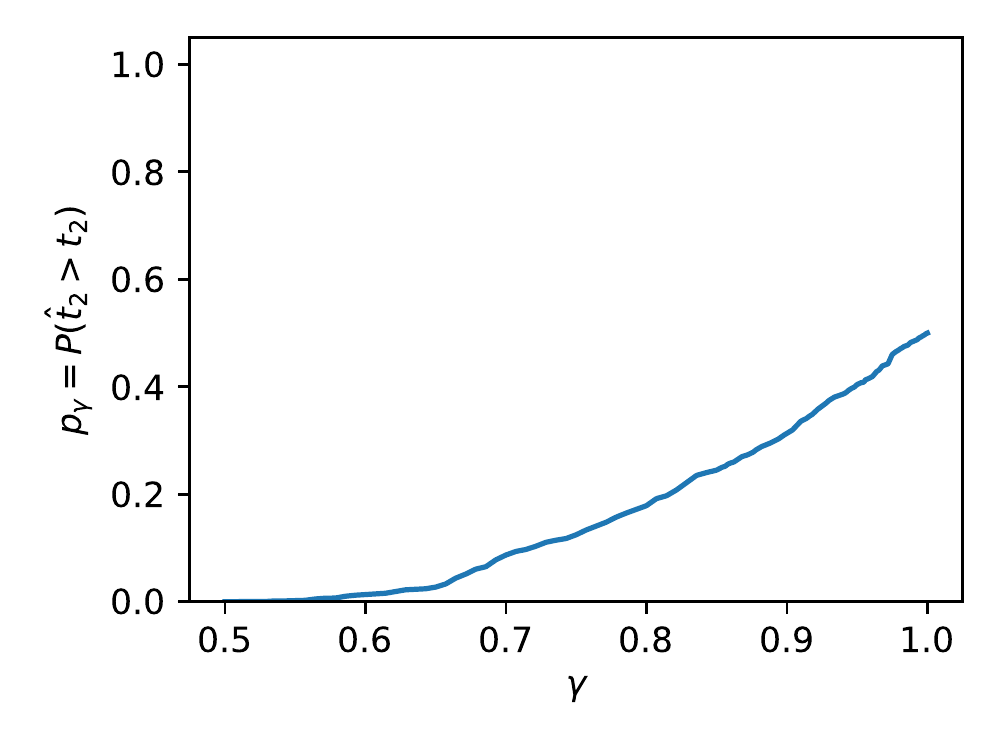}
		\caption
		{
			The probability of estimating a runtime longer than the true equivalent runtime with respect to $\gamma$.
		}
		\label{figure:correctioncoefficienttradeoff}
	\end{figure}

	Instead of considering the parameter $\gamma$, we can also think of the parameter $p_{\gamma} = \mathcal{P}[\hat{t}_2 > t_2]$.
	$p_\gamma$ is the probability of estimating a longer than equivalent runtime associated to $\gamma$.
	It is probably more useful for the user to think of $p_{\gamma}$, because this is what is directly related to the increase of probability of making a type I error in algorithm comparison: a lower $p_{\gamma}$ has an associated lower probability of predicting an unfairly longer equivalent runtime.
	To obtain an unbiased prediction of the equivalent runtime, it is enough to consider the parameter $p_{\gamma} = 0.5$.

	To make the computation of the equivalent runtime convenient for the user, we created a stand-alone (no dependencies) python script \href{https://github.com/EtorArza/RTDHW}{equivalent\_runtime.py} available in our GitHub \href{https://github.com/EtorArza/RTDHW}{repository}.
	This script predicts the equivalent runtime with the formula in Definition~\ref{definition:estimation_execution_time}, but considering the parameter $p_{\gamma}$ instead of $\gamma$.
	To achieve this, the $\gamma$ associated to $p_{\gamma}$ is calculated first.
	Given $p_{\gamma}$ the desired probability of estimating a runtime longer than the true equivalent runtime, $s_1,s_2$ the PassMark single-thread score of machines $M_1,M_2$ respectively and $t_1$ the runtime in machine $M_1$, we can use this script to estimate the equivalent runtime.
	For example if $p_{\gamma} = 0.1$, $s_1 = 1540$, $s_2 = 1643$ and $t_1 = 15.0$, then we can get $\hat{t_2}$ with

	\begin{lstlisting}
	python equivalent_runtime.py 0.1 1540 1643 15.0
	>> 11.461295
	\end{lstlisting}

	Even though the proposed model has an arbitrarily low probability of estimating a longer than actual equivalent runtime, this probability is not zero.
	In Section~\ref{section:modifying_sign_test}, we will propose a modification of the sign test that takes into account this probability and avoids an increase in the probability type I error.

	\subsection{Validation}
	\label{sec:verification_passmark_prediction}
	
	We have introduced a methodology to predict the equivalent runtime for single-thread optimization processes---a sequence of computer instructions that can be replicated in different machines---based on the PassMark single-thread score.
	In the following, we experimentally validate that the proposed methodology works as intended.
	To this end, we try to answer the following two questions: i) Is using the equivalent runtime better than using the same runtime in two machines? And ii) does predicting the equivalent runtime for other machines and optimization processes work as intended?

	\vspace{1em}
	\subsubsection{i) Predicting the equivalent runtime vs. using the same runtime}
	In this paper, we proposed predicting the equivalent runtime with the PassMark score.
	In the following, we show that it is better than just using the same runtime in two machines.
	To do so, we compare the prediction error, measured as the ratio with respect to the true equivalent runtime, when using the centered estimator in Equation~\eqref{equation:centered_estimator_runtime} of the equivalent runtime ($\hat{t}_2 = t_1 \cdot \frac{\minusbdivaparamregression - s_2}{\minusbdivaparamregression - s_1}$ )  vs. when estimating it as the same runtime as in the other machine ($\hat{t}_2 = t_1$).

	We estimate the prediction error with these two methods for the 64 optimization processes and 8 machines considered in the calibration of the linear regression (see Appendix~\ref{appendix:experimentation_execution} for details).
	Once we have measured the ratio between the estimated runtime and the true equivalent runtime, we apply the loss function $f(x) = \abs(\text{Log}_2(x))$, obtaining the \textit{log deviation ratio}.
	With this loss function, a prediction that was three times the true equivalent runtime is assigned the same loss as a prediction that was a third of the true equivalent runtime.
	In addition, the log deviation ratio is easier to interpret: for example, a log deviation ratio of $0$ means that the prediction was perfectly accurate, and a log deviation ratio of $1$ denotes that the prediction was double or half the true value etc.

	In Figure~\ref{fig:verif0fitdataequivvsruntimecumulativekernelsize0} we show the empirical distribution function of the log deviation ratio for \textit{equivalent runtime} and \textit{same runtime}.
	The results clearly point out that \textit{equivalent runtime}, predicting the equivalent runtime with the centered estimator in Equation~\eqref{equation:centered_estimator_runtime}, consistently produces a lower (better) error when compared to using the same runtime in two machines \textit{same runtime}.

	\begin{figure}
		\centering
		\includegraphics[width=0.7\linewidth]{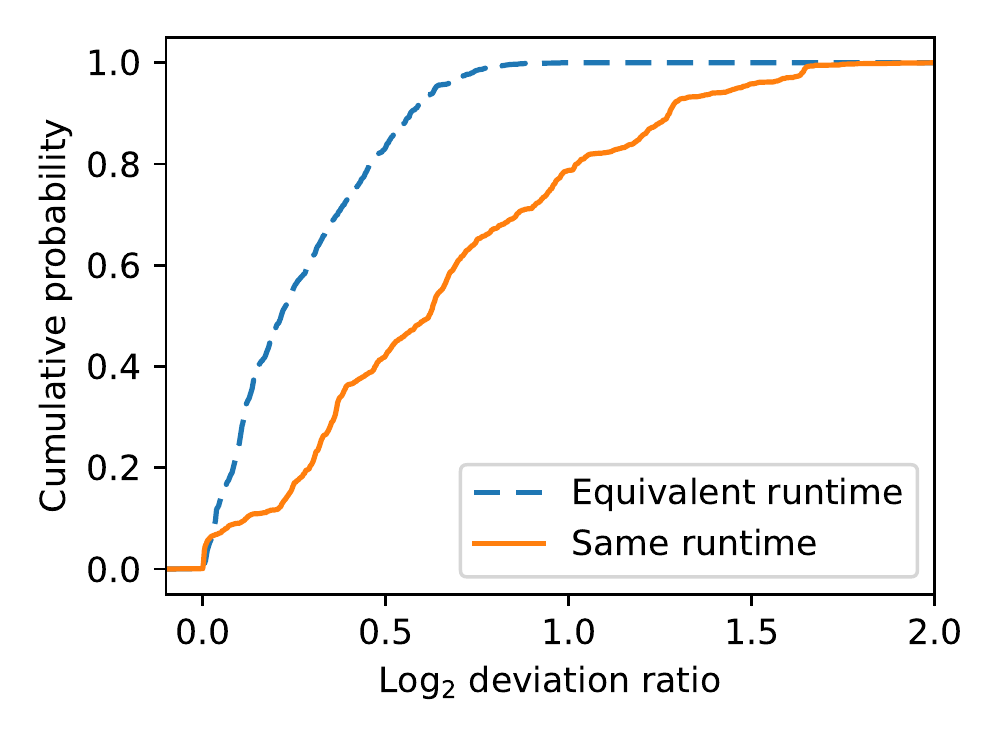}
		\caption{
			A comparison in estimation error of the predicted equivalent runtime (Equivalent runtime) and simply using the same runtime in both machines (Same runtime) with respect to the true equivalent runtime.
			The estimation error is measured as the log deviation ratio of the prediction of the equivalent runtime with respect to the true equivalent runtime. 
			A value closer to 0 indicates a lower prediction error.
		}
		\label{fig:verif0fitdataequivvsruntimecumulativekernelsize0}
	\end{figure}

	\vspace{1em}
	\subsubsection{ii) Validation in other optimization processes and CPUs}
	
	Each time we predict the equivalent runtime with the centered estimator in Equation~\eqref{equation:centered_estimator_runtime}, we expect the prediction to be either higher or lower than the true equivalent runtime.
	Controlling this prediction error is one of the key challenges of the proposed methodology, and allows the user to predict an equivalent runtime with a desired probability of being higher than the true equivalent runtime.
	However, since we fitted this estimator with a set of optimization problems and CPUs (described in detail in Appendix~\ref{appendix:experimentation_execution}), we need to validate the prediction with a different set of CPUs and optimization processes.
	
	\vspace{1em}
	\emph{Validation CPUs and optimization processes}
	
	The four optimization processes for the validation experiment are enumerated below.
	These four optimization processes are very different from the optimization processes used to fit the estimator.
	\begin{enumerate}
		\item Find the first $10^6$ prime numbers.
		\item Finding magic squares~\cite{Charles2013}.
		\item Solving the knapsack problem~\cite{geeksknapsack}.
		\item Solving the N queens problem~\cite{geeksqueens}.
	\end{enumerate}
	
	The CPUs of the machines considered in the validation experiment are listed below:

	{\centering
	\begin{footnotesize}
		\begin{tabular}{lr}
			\textbf{CPU model name}    & \textbf{PassMark score} \\
			Intel(R) Xeon(R) CPU @ 2.20GHz & 1383 \\
			Intel(R) Core(TM)2 Duo CPU P9600 @ 2.53GHz & 1075 \\
			Intel(R) Xeon(R) CPU E5-2680 v3 @ 2.50GHz & 1779 \\
			Intel(R) Xeon(R) Gold 6140 CPU @ 2.30GHz & 1840 \\
			Intel(R) Core(TM) i5-5200U CPU @ 2.20GHz & 1511 \\
		\end{tabular}
	\end{footnotesize}
	}

	Now, we compare the log deviation ratio of the centered estimator (Equation~\eqref{equation:centered_estimator_runtime}) both for the optimization processes and CPUs used to fit the estimator, and these new validation optimization processes and CPUs. 
	The empirical distribution function of the log deviation ratio is shown in Figure~\ref{fig:verif2equivfitdatavsverificationcumulativekernelsize0}.
	The error obtained with the CPUs and optimization processes used to fit the estimator (from now on \textit{Train}) is a lot smoother than with the validation CPUs and optimization processes (from now on \textit{Validation}).
	However, this is to be expected because \textit{Train} contains a larger amount of both optimization processes and CPUs.
	In addition, notice that for most of the $x$-axis, the error of \textit{Validation} is lower (better) than the error of \textit{Train}.
	This can also be explained by the variance of the error \textit{Validation} being larger due to the smaller amount of CPUs and optimization processes.

	In any case, both errors are very similar and close to each other, and this implies that the proposed methodology is indeed applicable to different CPUs and optimization processes.

	\begin{figure}
		\centering
		\includegraphics[width=0.7\linewidth]{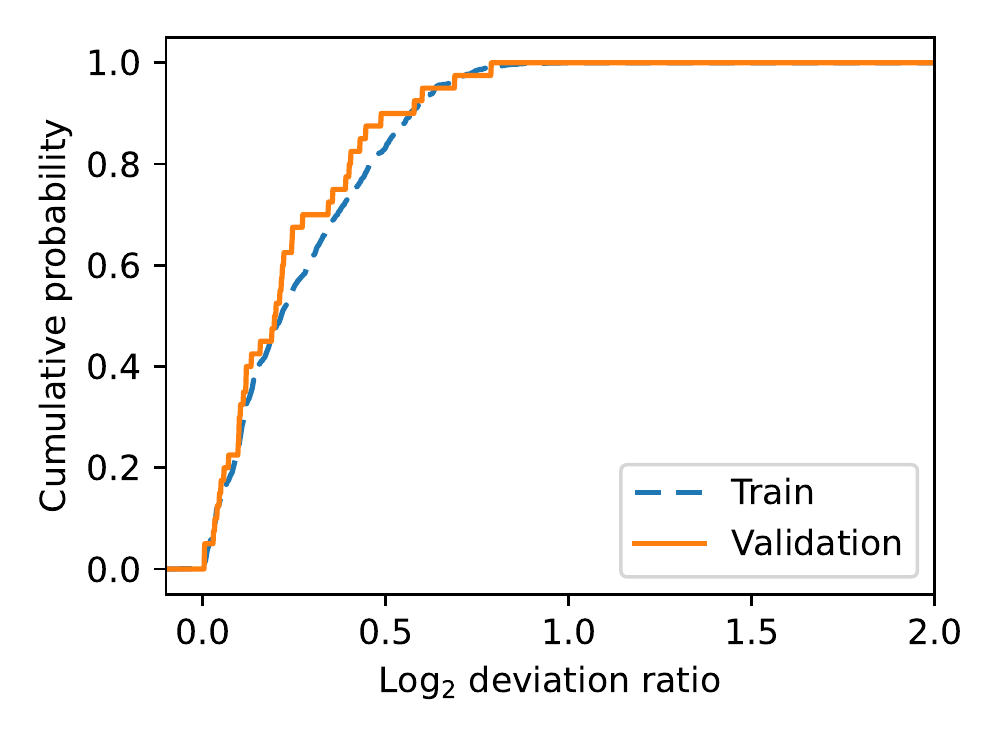}
		\caption{A comparison in estimation error of the equivalent runtime with the centered estimator.
		The estimation error for the optimization processes and CPUs used to fit the estimator (Train), and these new validation optimization processes and CPUs (Validation) are compared.
		The estimation error is measured as the log deviation ratio of the prediction of the equivalent runtime with respect to the true equivalent runtime. 
		A value closer to 0 indicates a lower prediction error.}
		\label{fig:verif2equivfitdatavsverificationcumulativekernelsize0}
	\end{figure}

	\section{Modifying the one-sided sign test}	
	\label{section:modifying_sign_test}


%

	In the previous sections, we proposed an estimator of the equivalent runtime of an algorithm in a machine.
	Specifically, we proposed a biased estimator with an arbitrary probability of estimating a runtime longer than the true equivalent runtime.
	By using this estimator, we can adjust the expected percentage of samples of the performance of algorithm $B$ computed with a runtime longer than the true equivalent runtime.
	When a runtime longer than the true equivalent time is estimated, the probability of making a type I error is higher than if the comparison were carried out in the same machine.
	Therefore, in this section, we propose a correction of the one-sided statistical test that takes into account the probability of estimating a longer than the true equivalent time and its subsequent increase in the probability type I error.

	Given algorithms $A,B$, a one-sided hypothesis test in algorithm performance comparison is as follows\footnote{It is noteworthy that failing to reject $H_0$ does not imply statistical evidence that $H_0$ is true, instead it suggests a lack of evidence against $H_0$. Therefore, in this case, it would not be correct to conclude that ``the algorithms compared perform the same with a statistical significance of $1-\alpha$''. }:

	\[
	\fbox{\begin{minipage}{0.82\textwidth}
			$H_0$: The performance of algorithm  $B$ is worse than or equal to $A$.\\
			$H_1$: The performance of algorithm  $B$ is better than $A$.
	\end{minipage}}
	\]


	We believe that the sign test~\cite{conover1980practical}  is a suitable hypothesis in the context of this paper and, in general, for comparing the performance of optimization algorithms (see Appendix~\ref{appendix:why_sign_test} for details).
	We limit the statistical test to the one-sided sign test, with the alternative hypothesis being that the algorithm whose equivalent runtime was estimated has a higher performance. 
	In the following, we propose a correction for the sign test that does not increase the probability of type I error.

	\subsection{One-sided sign test}
	\label{section:corrected_critical_value}

	The sign test~\cite{conover1980practical} is a special case of the binomial test, for $p=0.5$.
	In the context of algorithm performance comparison, the sign test statistically assesses if the paired performance of two algorithms is the same or not.
	Performing this statistical test involves first executing the optimization algorithms $A$ and $B$ in the same machine, with the same stopping criterion, in a set of $n$ problem instances ($i\in\{1,...,n\}$), obtaining the scores $a_i$ and $b_i$ for each algorithm-instance pair.
	
	These scores depend on which random seed was chosen (this seed represents all the nondeterministic parts of the algorithms, such as solution initialization).
	Thus, the performance of an algorithm in an instance can also be seen as a random variable that is completely determined, given a certain seed.
	We denote the random variables that represent the performance of algorithms $A$ and $B$ in an instance $i$ as $A_i$ and $B_i$, respectively.
	
	The statistical test allows us to draw conclusions about the algorithms on a larger set of problem instances based on the observed results in the set of $n$ instances.	
	The sign test is based on these three assumptions~\cite{conover1980practical}:
	
	\begin{itemize}
		\item Each of the sample pairs $A_i,B_i$ are mutually independent of the rest of the pairs.
		\item Any observable pair $a_i,b_i$ can be compared, that is, we can say that $a_i < b_i$, $b_i < a_i$ or $a_i = b_i$.
		\item The pairs are internally consistent, or if $\mathcal{P}[A_i > B_i] > \mathcal{P}[A_i < B_i]$ for one pair, then the same is true for all pairs. 
	\end{itemize}	
	
	In the context of algorithm comparison, the most problematic assumption is the first one.
	The reason is that in real-life benchmarks, some problem instances may share similarities, which means that there is no complete independence among all sample pairs $A_i,B_i$.
	The Mann-Whitney and the Wilcoxon signed rank test also contain this first assumption~\cite{conover1980practical}.
	However, in practice, this limitation is usually ignored.
	This is why, in general, it is a good idea to use a set of benchmark problems with many kinds of different instances.
	As future work, the proposed methodology could be adapted to be applicable to multiple benchmark sets, where the instances in each benchmark have similar properties.

	From now on, without loss of generality\footnote{A maximization problem can be converted into a minimization problem by multiplying the objective function with $-1$.}, we assume that the algorithms deal with a minimization problem, (i.e., $a_i$ is better than $b_i$ $\iff$ $a_i<b_i$). 
	We define $\#\{A_i < B_i\}$ as a random variable that counts the number of cases\footnote{Without loss of generality, we can assume that $a_i \neq b_i$, because samples in which $a_i = b_i$ are removed when performing the sign test~\cite{conover1980practical}.} that $A_i<B_i$ in $n$ instances.
	Then, the following hypothesis test corresponds to the one-sided sign test~\cite{conover1980practical}:

	\[
	\fbox{\begin{minipage}{0.4\textwidth}\centering
			$H_0: \mathcal{P}[A_i < B_i] \geq \mathcal{P}[A_i > B_i]$\\
			$H_1: \mathcal{P}[A_i < B_i] < \mathcal{P}[A_i > B_i]$
	\end{minipage}}
	\]

	Under $H_0$, the null distribution of $\# \{A_i < B_i\}$  is $Bin(n,0.5)$, where $Bin(n,0.5)$ is the binomial distribution of size $n$ and rate of success $0.5$~\cite{conover1980practical}.
	The \textit{p}-value for this hypothesis test is 
	
	\begin{equation}
	p(k) = \mathcal{P}[\# \{A_i < B_i\} \leq k \lwhere H_0] = \mathcal{P}[Bin(n,0.5) \leq k]
	\end{equation}
	
	where $k$ is substituted by the statistic of the sign test: the number of cases that $a_i<b_i$ in all $i\in \{1,...,n\}$ samples, denoted as $\#\{a_i < b_i\}$.
	By definition~\cite{wasserstein_asa_2016}, the \textit{p}-value can be interpreted as the probability of obtaining a more extreme (lower) statistic than the observed, assuming $H_0$ is true. 
	If we reject the null hypothesis when $p(\#\{a_i < b_i\}) \leq \alpha$, then the probability of type I error is less than or equal to $\alpha$.

	\subsection{The corrected p-value}

	In practice, the statistic $\#\{a_i < b_i\}$ cannot be computed because the true equivalent runtime $t_2$ is unobservable.
	The equivalent runtime is approximated with $\hat{t}_2$ (see Definition~\ref{definition:estimation_execution_time}).
	As a result, each $b_i$ is substituted with its corresponding $\hat{b}_i$, which is computed by using $\hat{t}_2$ instead of $t_2$ as the stopping criterion.
	This means that the statistic $\#\{a_i < b_i\}$ is replaced by $\#\{a_i < \hat{b}_i\}$, which counts the number of times that $a_i < \hat{b}_i$ (without loss of generality, minimization is assumed) is observed.
	Therefore, we need to define the function to obtain the \textit{p}-value associated to the statistic $\#\{a_i < \hat{b}_i\}$:
	
	\begin{equation}
	\hat{p}(k) = \mathcal{P}[\# \{A_i < \hat{B}_i\} \leq k \lwhere H_0]
	\end{equation}
	
	where the \textit{p}-value is obtained by substituting $k$ with the observed statistic $\#\{a_i < \hat{b}_i\}$.
	The \textit{p}-value is the probability of obtaining a statistic that is lower than or equal to the observed when $H_0$ is true.

	Notice that if we reject the null hypothesis when $\hat{p}(\#\{a_i < \hat{b}_i\}) \leq \alpha$, then the probability of type I error is less than or equal to $\alpha$.
	However, for this to hold, we need to assume that $\hat{b}_i < b_i$: a longer optimization time produces a lower or equal objective value (in a minimization setting).
	In general, we assume that a longer optimization time can only produce a lower or equal objective value.

	Let $p_\gamma$ be the desired upper bound of the probability of predicting a runtime longer than the true equivalent runtime for each instance $i$.
	Then, in more than $1 - p_\gamma$ of cases, $b_i$ is obtained with a longer runtime than $\hat{b}_i$, and, therefore, the probability of $\hat{b}_i \geq b_i$ is greater than $1 - p_\gamma$.
	When a small $p_\gamma$ is chosen, we expect that $\#\{a_i < \hat{b}_i\}$ is higher than or equal to $\#\{a_i < b_i\}$, but it will not always be so.
	To overcome this limitation, we need to define a corrected \textit{p}-value $\hat{p}_c$, an upper bound of the actual \textit{p}-value associated with statistic $\#\{a_i < \hat{b}_i\}$, that takes into account the probability $p_\gamma = \hat{t}_2 > t_2$.
	Specifically, we define this upper bound as 
	
	\begin{equation}
	\label{equation:compute_corrected_p_value}
	\hat{p}_c(k) = \sum_{v=0}^{n} (1 - \mathcal{P}[Bin(n,p_\gamma) < \max(0,v - k)]) \mathcal{P}[Bin(n,0.5) = v]
	\end{equation}
	
	 such that 
	
	\begin{equation}
	\label{equation:p_value_bound_sign_test}
	\hat{p}_c(k) > \mathcal{P}[\#\{A_i < \hat{B}_i\} \leq k \lwhere H_0] = \hat{p}(k)
	\end{equation}

	is satisfied (we prove this inequality in Appendix~\ref{appendix:proof_inequality_p_corrected}), where $H_0$ implies that statistic $\#\{A_i < B_i\}$ follows the null distribution $Bin(n,0.5)$~\cite{conover1980practical}.
	Figure~\ref{fig:alpha_vs_corrected_alpha} shows $p$ and $\hat{p}_c$ side by side. 
	Notice that $\hat{p}_c$ is slightly higher, because it needs to account for the probability that $\hat{t}_2 > t_2$.
	The corrected \textit{p}-value $\hat{p}_c$ is interesting because rejecting $H_0$ when $\hat{p}_c(\#\{a_i < \hat{b}_i\}) < \alpha$ has also an associated probability of type I error lower than $\alpha$.
	The reason is that $\hat{p}_c(\#\{a_i < \hat{b}_i\}) > \hat{p}(\#\{a_i < \hat{b}_i\})$, and therefore, $ \hat{p}_c(\#\{a_i < \hat{b}_i\}) < \alpha \Rightarrow \hat{p}(\#\{a_i < \hat{b}_i\}) < \alpha$.

	\begin{figure}
		\centering
		\caption*{$\hspace{3.2em}$ \textit{p}-value for the sign test}
		\includegraphics[width=0.7\linewidth]{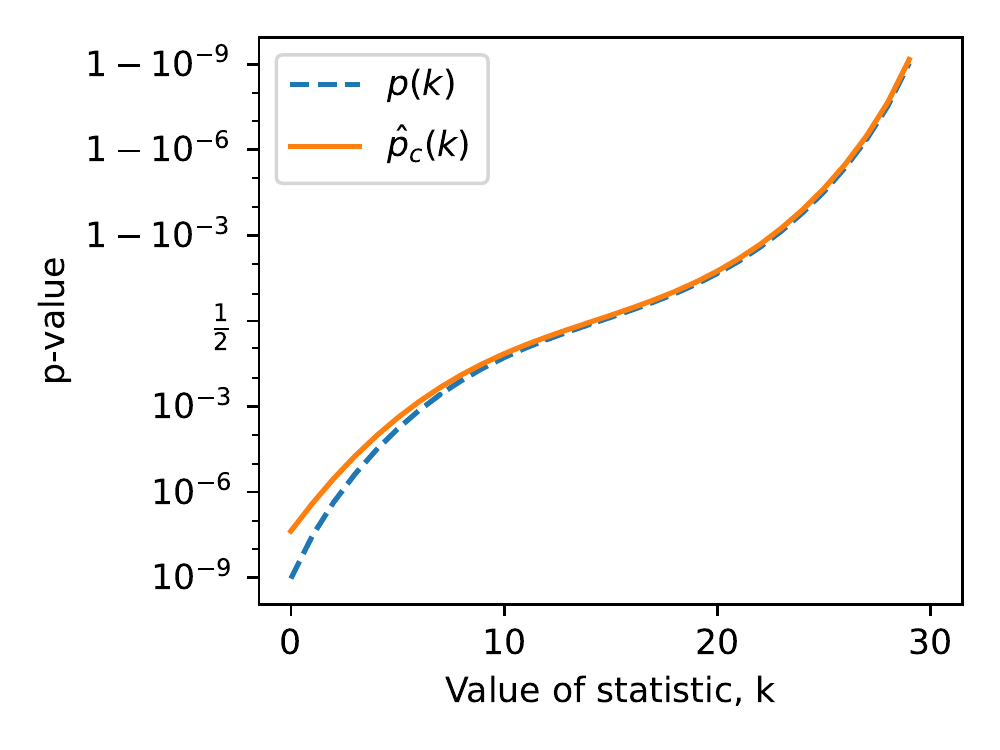}
		\caption
		{
			This figure shows the \textit{p}-value and the corrected \textit{p}-value $\hat{p}_c$ for the sign test with a sample size of $n = 30$ and $p_\gamma = 0.01$.
			Under the null hypothesis $H_0$, the \textit{p}-value represents the probability of $\#\{a_i < b_i\} \leq k$, while the corrected \textit{p}-value represents an upper bound of the probability of $\#\{a_i < \hat{b}_i\} \leq k$.
			Under the null hypothesis, $\#\{a_i < b_i\}$ follows a binomial distribution of size $n$ and probability of success $0.5$.
		}
		\label{fig:alpha_vs_corrected_alpha}
	\end{figure}

	To generate the corrected p-values conveniently, we created a standalone (only dependencies in the standard library) python script \href{https://github.com/EtorArza/RTDHW/blob/master/corrected_p_value.py}{corrected\_p\_value.py} available in our \href{https://github.com/EtorArza/RTDHW}{GitHub repository}.
	This script uses an efficient and precise implementation of Equation~\eqref{equation:compute_corrected_p_value}.
	To calculate the corrected p-value, we need the probability of predicting a runtime longer than the true equivalent runtime $p_\gamma$, the sample size $n$, and the number of observations $k$ in which algorithm $A$ outperforms algorithm $B$.
	For example if $p_{\gamma} = 0.1$, $n = 20$, and $k = 3$, then we can get $\hat{p}_c(3)$ with
	
	\begin{lstlisting}
	python corrected_p_value.py 0.1 20 3
	>> 0.043596000
	\end{lstlisting}

	\section{Applying the methodology}
	\label{section:case_study}

	In this section, we illustrate how to apply the proposed methodology with two examples.
	The proposed methodology is very simple to use and does not require any additional software.
	Further details and material are available in our \href{https://github.com/EtorArza/RTDHW}{GitHub repository}.

	\subsection{Example I}

	In this example, we will compare a simple random initialization local search procedure with a memetic search algorithm by Benlic et al.~\cite{benlic_memetic_2015} for the QAP.
	Benlic et al. run the code sequentially, without any parallel or multithreaded execution.
	Using the proposed methodology, we will statistically assess the difference in the performance between these two algorithms, without having to execute the code of the memetic algorithm.
	In this case, the memetic search algorithm is algorithm $A$, because this is the algorithm of which we already have the results, and the local search algorithm is algorithm $B$, because this is the algorithm whose runtime is going to be estimated.
	For this experiment, we choose a probability of predicting a longer than true equivalent runtime of $p_\gamma= 0.01$.

	\underline{Step 1: Obtaining the data}

	To apply the proposed methodology, we need to find certain information about the execution of the memetic algorithm.
	The required data includes the list of instances to be used in the experimental comparison, the average objective value obtained by the memetic search algorithm, and the runtime of the memetic search algorithm in each of the instances.
	The information extracted from the article by Benlic et al.~\cite{benlic_memetic_2015} is listed in the first three columns of Table~\ref{table:comparison_with_memetic_data_table}.
	Also, we need to find the CPU model of the machine in which the memetic search was run (machine $M_1$), which is "Intel Xeon E5440 2.83GHz" as specified in their article.
	Finally, the machine score of this CPU, measured as PassMark single-thread score, is $s_1 = \singlethreadscoreofXeonEFiveFourFourZero$ (as seen on the file \href{https://github.com/EtorArza/RTDHW/blob/master/cpu_scores.md}{cpu\_scores.md} in the GitHub repo).

		\begin{table}[]
		\centering
		\begin{footnotesize}
			\caption*{\small }
			\begin{tabular}{l|r|r|r|r}
				\hline
\multicolumn{1}{c}{ } & \multicolumn{2}{c|}{\multirow{2}{3.7cm}{\textbf{Data obtained from the paper by Benlic et al.~\cite{benlic_memetic_2015}}}} & \multicolumn{2}{c}{\multirow{2}{4.5cm}{\textbf{Data corresponding to the execution of $B$ in machine $M_2$}}} \\
\multicolumn{1}{c}{ } & \multicolumn{2}{c|}{\textbf{}}                               \\ \hline
\multirow{2}{4em}{instance} &     runtime in		  &    objective & estimated equivalent & objective \\
							& seconds, $t_1$ & value, $a_i$ & runtime, $\hat{t}_2$ &       value, $\hat{b}_i$ \\ \hline
				tai40a      &                     486 &       \textbf{       3141222} &  313.68     &     3207604 \\
				tai50a      &                    2520 &       \textbf{       4945266} &  1626.50    &     5042830 \\
				tai60a      &                    4050 &       \textbf{       7216339} &  2614.02    &     7393900 \\
				tai80a      &                    3948 &      \textbf{       13556691} &  2548.18    &    13840668 \\
				tai100a     &                    2646 &      \textbf{       21137728} &  1707.82    &    21611122 \\
				tai50b      &                      72 &     \textbf{       458821517} &  46.47      &   459986202 \\
				tai60b      &                     312 &     \textbf{       608215054} &  201.37     &   609946393 \\
				tai80b      &                    1878 &     \textbf{       818415043} &  1212.13    &   824799510 \\
				tai100b     &                     816 &    \textbf{       1185996137} &  526.67     &  1195646366 \\
				tai150b     &                    4686 &     \textbf{       499195981} &  3024.52    &   505187740 \\
				sko100a     &                    1338 &        \textbf{       115534} &  863.59     &      153082 \\
				sko100b     &                     390 &        \textbf{       152002} &  251.72     &      155218 \\
				sko100c     &                     720 &        \textbf{       147862} &  464.71     &      149076 \\
				sko100d     &                    1254 &        \textbf{       149584} &  809.37     &      150568 \\
				sko100e     &                     714 &        \textbf{       149150} &  460.84     &      150638 \\
				sko100f     &                    1380 &        \textbf{       149036} &  890.70     &      150006
			\end{tabular}
		\end{footnotesize}
		\caption
		{
			This table shows all the data in the first example.
			The first three columns correspond to the QAP instances in which the memetic search algorithm by Benlic et al.~\cite{benlic_memetic_2015} was tested, the runtime of the memetic search algorithm in each instance, and the best objective value they obtained in each execution, averaged in 10 executions per instance.
			The information in these three columns was directly obtained from the paper by Benlic et al., without any additional executions.
			The next two columns show the estimated equivalent runtimes and the average objective value scores that the local search algorithm obtained with this runtime as the stopping criterion.
			The local search algorithm was executed in machine $M_2$.
		}
		\label{table:comparison_with_memetic_data_table}
	\end{table}

	\underline{Step 2: Estimating the equivalent runtime}
	
	With the data already gathered, the next step is to estimate the equivalent runtime of each instance for the machine in which the local search algorithm will be executed (machine $M_2$).
	Estimating the runtime requires the score $s_2$ of this machine to be known.
	The CPU model of $M_2$ is "Intel Celeron N4100", with a PassMark single-thread score of $s_2 = \singlethreadscoreofCeleronNFourOneZeroZero$.
	With this information, we are ready to estimate the equivalent runtime $\hat{t}_2$ for each instance in machine $M_2$.
	We run the script
	
	\begin{lstlisting}
	python equivalent_runtime.py 0.01 1219 1012 t_1
	\end{lstlisting}

	where $t_1$ is substituted with the runtime of the memetic search algorithm in each instance, listed in Table~\ref{table:comparison_with_memetic_data_table}.

	\underline{Step 3: Running the experiments}

	Now, we execute the local search algorithm in the instances listed in Table~\ref{table:comparison_with_memetic_data_table}, using the estimated runtimes $\hat{t}_2$ as the stopping criterion.
	This execution is carried out in machine $M_2$, and the best objective function values $\hat{b}_i$ are listed in Table~\ref{table:comparison_with_memetic_data_table}.
	Following the procedure by Benlic et al., these best objective values are averaged over 10 executions.

	\underline{Step 4: Obtaining the corrected p-value}

	Once all the results have been computed, the next step is to compute the statistic $\#\{a_i < \hat{b}_i\}$, which counts the number of times that $a_i < \hat{b_i}$.
	In this case, $a_i < \hat{b_i}$ happens 15 times, and therefore, $k = \#\{a_i < \hat{b}_i\} = 15$.
	Now we can compute the corrected p-value of the one sided sign test.
	To do so, we use the script \href{https://github.com/EtorArza/RTDHW/blob/master/corrected_p_value.py}{corrected\_p\_value.py} with the chosen $p_\gamma= 0.01$, $n = 15$ and $k = 15$.
	
	\begin{lstlisting}
	python corrected_p_value.py 0.1 15 15
	>> 1.0000000
	\end{lstlisting}

	\underline{Step 5: Conclusion}
	
	Since the observed corrected p-value is \underline{not} lower than the chosen $\alpha = 0.05$, we cannot reject $H_0$.
	In this case, the conclusion is that with the amount of data that we have and the chosen target probability of type I error of $\alpha = 0.05$, we can not say that the local search algorithm has a statistically significantly better performance than the memetic search algorithm\footnote{It would not be correct to conclude that the two algorithms perform (statistically significantly) the same, or that the memetic search performs statistically significantly better than the local search.}.

	It is important to note that, if we had considered the original runtimes $t_1$ as the stopping criterion for algorithm $B$ in machine $M_2$ (longer than the estimated equivalent runtime $\hat{t}_2$), the local search would have had an unfairly longer runtime.
	In other words, the comparison would have been biased toward the local search.

	\subsection{Example II}
	
	In this second example, we will compare the same simple random initialization local search procedure with an estimation of distribution algorithm (EDA) for the QAP~\cite{arza_kernels_2020}.
	The estimation of distribution algorithm (EDA) was executed sequentially, without parallel or multithreaded execution.
	In this case, the EDA is algorithm $A$, because this is the algorithm of which we already have the results, and the local search algorithm is algorithm $B$, because this is the algorithm whose runtime is estimated.
	For this experiment, we choose a probability of predicting a longer than true equivalent runtime of $p_\gamma= 0.01$.
	
	\pagebreak
	\underline{Step 1: Obtaining the data}
	
	To apply the proposed methodology, we need to find certain information about the execution of the EDA.
	The required data includes the list of instances to be used in the comparison, the average objective value obtained by the EDA, and the runtime used in each instance.
	The information extracted from the paper~\cite{arza_kernels_2020} is listed in Table~\ref{table:comparison_with_EDA_table}.
	In addition, we need to find the CPU model of the machine in which the EDA search was run (machine $M_1$), which is "AMD Ryzen 7 1800X", as specified in the paper.
	Finally, the machine score of this CPU, measured as PassMark single-thread score is $s_1 = 2182$, as seen on the \href{https://github.com/EtorArza/RTDHW/blob/master/cpu_scores.md}{cpu\_scores.md} file.

	\begin{table}[]
		\centering
		\begin{footnotesize}
			\begin{tabular}{l|r|r|r|r}
	\hline
\multicolumn{1}{c}{ }	& \multicolumn{2}{c|}{\multirow{2}{3.7cm}{\textbf{Data obtained from the paper by Arza et al.~\cite{arza_kernels_2020}}}} & \multicolumn{2}{c}{\multirow{2}{4.5cm}{\textbf{Data corresponding to the execution of $B$ in machine $M_2$}}} \\
\multicolumn{1}{c}{ }	& \multicolumn{2}{c|}{\textbf{}}     \\ \hline
	\multirow{2}{4em}{instance} &     runtime in		  &    objective & estimated equivalent & objective \\
	& seconds, $t_1$ & value, $a_i$ & runtime, $\hat{t}_2$ &       value, $\hat{b}_i$ \\ \hline
				bur26a     &      1.45 &       5432374 	 		& 1.80  & \textbf{5426670} \\
				bur26b     &      1.45 &       3824798 	 		& 1.80  & \textbf{3817852} \\
				bur26c     &      1.43 &       5427185 			& 1.77  & \textbf{5426795} \\
				bur26d     &      1.44 &       3821474 			& 1.78  &  \textbf{3821239}\\
				nug17      &      0.44 &          1735 			& 0.54  &  \textbf{1734}\\
				nug18      &      0.51 &          1936 			& 0.63  &  1936\\
				nug20      &      0.68 &          2573 			& 0.84  & \textbf{2570} \\
				nug21      &      0.77 &          2444 			& 0.95  & 2444\\
				tai10a     &      0.12 &        135028 			& 0.14  &135028   \\
				tai10b     &      0.12 &       1183760 			& 0.14  & 1183760 \\
				tai12a     &      0.18 &        224730 			& 0.22  & \textbf{224416} \\
				tai12b     &      0.19 &      39464925 			& 0.23  & 39464925 \\
				tai15a     &      0.31 &        388910 			& 0.38  & \textbf{388214} \\
				tai15b     &      0.31 &      51768918 			& 0.38  & \textbf{51765268} \\
				tai20a     &      0.69 &      709409            & 0.85  & \textbf{703482} \\
				tai20b     &      0.68 &     122538448 			& 0.84  & \textbf{122455319} \\
				tai40a     &      5.41 &     \textbf{3194672}   & 6.72  &  3227894\\
				tai40b     &      5.41 &     644054927 			& 6.72  &   \textbf{637470334}\\
				tai60a     &     19.23 &     \textbf{7367162}   & 23.88 & 7461354 \\
				tai60b     &     19.21 &   \textbf{611215466}   & 23.86 & 611833935 \\
				tai80a     &     50.09 &    \textbf{13792379}   & 62.22 & 13942804 \\
				tai80b     &      50.1 &     836702973 		    & 62.23 & \textbf{830729983}

			\end{tabular}
		\end{footnotesize}
		\caption
		{
			This table shows all the data in the second example.
			The first three columns correspond to the QAP instances in which the EDA algorithm by Arza et al.~\cite{arza_kernels_2020} was tested, the runtime of the EDA in each instance, and the best objective value they obtained in each execution, averaged in 10 executions per instance.
			The information in these three columns was directly obtained from this paper, without any additional executions.
			The next two columns show the estimated equivalent runtimes and the average objective value scores that the local search algorithm obtained with this runtime as the stopping criterion.
			The local search algorithm was executed in machine $M_2$.
		}
		\label{table:comparison_with_EDA_table}
		
	\end{table}

	\pagebreak
	\underline{Step 2: Estimating the equivalent runtime}
	
	With the data already gathered, the next step is to estimate the equivalent runtime of each instance for the machine in which the local search algorithm will be executed (machine $M_2$).
	The CPU model of $M_2$ is "Intel Celeron N4100" (the same as in the previous example), with a PassMark single-thread score of $s_2 = \singlethreadscoreofCeleronNFourOneZeroZero$.
	To estimate the runtime for each instance, we run
	
	\begin{lstlisting}
	python equivalent_runtime.py 0.01 2182 1012 t_1 
	\end{lstlisting}

	where $t_1$ is substituted with the runtime of the EDA algorithm in each instance, listed in Table~\ref{table:comparison_with_EDA_table}.

	\underline{Step 3: Running the experiments}
	
	Now, we execute the local search algorithm in the instances listed in Table~\ref{table:comparison_with_EDA_table}, using the estimated runtimes $\hat{t}_2$ as the stopping criterion.
	This execution is carried out on machine $M_2$, and the best objective function values $\hat{b}_i$ are listed in Table~\ref{table:comparison_with_EDA_table}.
	Following the procedure by Arza et al., these best objective values are averaged over 20 executions.

	\underline{Step 4: Obtaining the corrected p-value}

	After the executions, the statistic $k = \#\{a_i < \hat{b}_i\}$ is computed, which counts the number of times that $a_i < \hat{b_i}$.
	In this case, $a_i < \hat{b_i}$ happens 4 times, and therefore, $\#\{a_i < \hat{b}_i\} = 4$.
	Now we compute the corrected p-value with with the chosen $p_\gamma= 0.01$, $n = 17$ and $k = 4$.
	
	\begin{lstlisting}
		python corrected_p_value.py 0.01 17 4
		>> 0.033192784
	\end{lstlisting}

	\underline{Step 5: Conclusion}
	
	The observed corrected p-value is lower than the chosen $\alpha = 0.05$, and therefore we reject $H_0$.
	The conclusion is that with a probability of type I error of $\alpha = 0.05$, the performance of the local search procedure is statistically significantly better than the performance of the EDA.
	
	In this case, machine $M_1$ is more powerful (in terms of computational capabilities) than machine $M_2$. 
	If we had considered the original runtimes $t_1$ as the stopping criterion for algorithm $B$ in machine $M_2$ (shorter than the estimated equivalent runtime $\hat{t}_2$), it would have been more difficult for the local search to perform better than the EDA.
	In that case, $H_0$ might not have been rejected.

\section{Limitations, applicability and future work}
\label{section:limitations}

Below, we discuss the limitations, applicability, and potential future developments of the proposed methodology.

\subsection{Multiple threads/cores}

The purpose of the presented work is to compare two algorithms with the same computational resources.
One of the limitations of the presented model is that it should only be applied with optimization processes that run on a single-thread.
This is because the single-thread PassMark score and the optimization process $\rho'$ (see Appendix~\ref{appendix:experimentation_execution})) that were used to calibrate the linear regression in Definition~\ref{definition:linear_regression_passmark} are also single-threaded.

However, many of the optimization algorithms in the literature today are not single-threaded.
For example, many problems involve linear algebra operations that can benefit from a multithreaded speedup, and most consumer CPUs today have at the very least two cores.
Hence, the single-thread PassMark score is not suitable for optimization algorithms that involve these types of operations.
Although, theoretically, it should always be possible to execute parallel code sequentially.

The multi thread \href{https://www.cpubenchmark.net/high_end_cpus.html}{PassMark score} does take into account the multithreaded nature of the CPUs, and could, therefore, be used to re-calibrate the linear regression.
This re-calibration would also involve defining another optimization process $\rho'$ that makes use of the multi thread capabilities of the CPU, such as solving linear problems or other processes that involve matrix multiplications.

There is, however, an additional limitation inherent to parallel executing algorithms that makes their comparison in different machines difficult. 	
Suppose we have two algorithms that run in parallel and their maximum speedup is obtained when executed in four cores, and additional threads/cores offer negligible improvement.
Now let us assume that we have two machines, $M_1$ with four cores and $M_2$ with sixteen.
Let us also assume that the cores in these two machines are similar in speed.
Then, roughly speaking, we expect that the algorithm executed in machine $M_2$ should have an equivalent runtime 4 times shorter.
However, since the two algorithms can only take advantage of the speedup provided by, at most, four cores; the algorithm executed in machine $M_2$ would have a huge disadvantage.

In general, it is difficult to know the maximum potential speedup of the execution in parallel of an algorithm presented in the literature.
This makes the prediction of the equivalent runtime of parallel algorithms more challenging than for single-thread algorithms.
Taking into account the additional difficulty associated to the comparison of parallel algorithms, we think that the comparison of optimization algorithms that run on single-thread is a reasonable starting point.
In future work, it would be interesting to extend the proposed methodology for multi thread algorithms.

\subsection{CPU as the only bottleneck} 

The PassMark single-thread score measures the computing capability of the CPU, disregarding other components like the hard disk or the speed and the size of the RAM memory.
Therefore, the prediction of the equivalent runtime is only applicable to optimization algorithms that are CPU intensive, or in other words, optimization algorithms that have their execution speed limited by the CPU.

There are many optimization algorithms whose runtime is determined mainly by the speed of the memory, instead of the CPU.
Specially, when optimization algorithms require large amounts of data to be loaded to memory repeatedly.
Conversely, when an optimization algorithm does not use too much memory, we can expect its runtime to be more CPU dependent.

In addition, many optimization algorithms in machine learning today are executed in GPUs and sometimes even on specialized hardware.
Compared to CPU execution, GPU offers speedups when similar operations are applied to large amounts of data.
As in the multi core case, predicting the runtime of algorithms in GPUs is more challenging than in single-threaded execution in CPU.
GPUs themselves have several processing cores and integrated memory that varies in size and speed from model to model.

The proposed methodology could be adapted for algorithms whose runtime depends on memory or GPU speed.
In fact, PassMark has a benchmarks for RAM and GPUs.
Therefore, it could be possible to re-calibrate the linear regression for either RAM or GPU dependent tasks.
This re-calibration would also involve defining another optimization process $\rho'$.
Taking into account the additional difficulty associated to measuring the runtime of algorithms that depend on RAM and GPU, we think that the comparison of optimization algorithms whose runtime depends primarily on the CPU is a reasonable starting point for this paper.
In future work, it could be interesting to adapt the methodology for algorithms whose runtime depends on RAM or GPU.

\subsection{Efficiency of the implementation}

In addition to the limitations related to the hardware, the implementation of the algorithms can also have an impact in the runtime.
For instance, if the same algorithm is implemented in both Python and C, the runtime in C is likely to be shorter.
But even within the same programming language, the runtime could change depending on the compiler flags (i.e. the -O3 will probably outperform no optimizations) or the configuration of the interpreter.
In addition to the previous factors, the implementation itself could also be more or less efficient, depending on the skill of the programmer and the time it invests in designing efficient code.

Even though there are quite a few factors that depend on the implementation, we argue that by implementing the code in the same programming language the results should be comparable.
In any case, even when the methodology proposed in this paper is not used, it is the responsibility of the researcher to make sure that the comparison is fair in terms of the implementation.
This limitation is not inherent to the proposed methodology, but to the comparison of two algorithms in general.

\subsection{The variance in the PassMark single-thread score}

The PassMark single-thread score is a score for CPUs that is correlated with their single-thread performance.
However, the performance of CPUs is not the same even within the same model. 
This is known as the \textit{silicon lottery}, and is caused by the manufacturing process of CPUs.
In addition, the performance of the CPU will also be limited by the cooling system used.
With better cooling, we can expect a better CPU performance.

The PassMark single-thread score takes into account this variance, and the scores are the averages of several users' submitted results.
Still, the cooling setup and the silicon lottery of the researcher that wants to apply the proposed methodology will inevitably introduce a variance to the predictions of the equivalent runtime.

The presented method models the probability of predicting an equivalent runtime that is longer than the true equivalent runtime. And by doing so, it takes into account this variance because the machines used in the calibration process of the linear regression inevitably have different cooling systems and are also affected by the silicon lottery.

\subsection{Very high and low PassMark scores}

Finally, there is a limitation regarding the chosen machine score: the PassMark single-thread score.
In Section~\ref{section:norm_exec_time}, we saw that a linear function is a suitable function to model the relationship between the machine score and the runtime of the reference optimization process $\rho'$ (the definition of $\rho'$ is given in Appendix~\ref{appendix:experimentation_execution}).
The reference optimization process $\rho'$ is used to calibrate the linear regression in Definition~\ref{definition:linear_regression_passmark} so that the equivalent runtime of other optimization processes $\rho$ can later be predicted based on this formula.
The formula of the fitted linear regression is $t(M_j,\rho') \approx \aparamregression s_j + \bparamregression$ where $t(M_j,\rho')$ is the equivalent runtime of $\rho'$ in machine $M_j$, and $s_j$ is the score of machine $M_j$. 
With this formula, a PassMark single-thread score higher than 3223 produces a negative estimated equivalent runtime, which does not make sense.
However, for the \nmachines\ machines used to fit the data, as Figure~\ref{fig:passmark_base_algorithm_regression} shows, the linear model seems to be suitable.

To overcome this limitation, we recommend applying the proposed methodology only in machines with PassMark single-thread scores in the interval $(411,2185)$.
These values correspond to the highest and lowest values used in the fitting of the linear regression.
More than 70\% of the CPUs in the provided list (see the file \href{https://github.com/EtorArza/RTDHW/blob/master/cpu_scores.md}{cpu\_scores.md} in the GitHub repo) have their PassMark score in this interval.
In addition, the CPUs that do not have their score in this interval are either very new or very old, which means that the proportion of the user base with the PassMark single-thread score in this interval is probably way higher than 70\%.
As future work, and especially when more powerful processors are available, the methodology can be updated to incorporate these new processors or even change the machine score to other benchmark scores beyond the PassMark single-thread score.

\subsection{Assumptions of the corrected sign test}
\label{section:limitations_sign_test}

The corrected sign test is based on certain assumptions and should only be used taking into account certain limitations that will be addressed in this section.
First, we will address the assumption related to the probability of predicting a runtime longer than the true equivalent runtime.
Let $(\hat{t}_2 > t_2)_i$ be a random variable that represents if the estimated equivalent runtime for algorithm $A$, instance $i$ in machine $M_2$, is longer than the true equivalent runtime or not.
The required assumption is similar to assuming that $\hat{t}_2 > t_2$ is mutually independent for each instance $i$.
Specifically, it is required\footnote{This assumption is required in the proof of Equation~\eqref{equation:p_value_bound_sign_test} in Appendix~\ref{appendix:proof_inequality_p_corrected}. Specifically, it is used in Lemma~\ref{lemma:bound_coef_of_theorem}.} that $\mathcal{P}[(\hat{t}_2 > t_2)_i | \cap_{i \neq j} (\hat{t}_2 > t_2)_j] < p_{\gamma}$.

One could argue that this assumption is false because $(\hat{t}_2 > t_2)$ depends on many factors, such as the machines used in the experimentation.
The same two machines are used to compute all samples $a_i,b_i$ suggesting that all $(\hat{t}_2 > t_2)_i$ can never be truly independent among each other.
However, even though we can not ensure that $\mathcal{P}[(\hat{t}_2 > t_2)_i | \cap_{i \neq j} (\hat{t}_2 > t_2)_j] < p_{\gamma}$, by choosing a suitable correction coefficient $\gamma$, in Section~\ref{section:norm_exec_time}, we estimated that $\mathcal{P}[(\hat{t}_2 > t_2)_i] < p_{\gamma}$.

In addition to the previous assumption, the proposed methodology only considers one side hypothesis testing.
In this regard, it should only be applied to show a statistically significantly superior performance of the algorithm whose equivalent runtime was estimated (denoted as algorithm $B$ in this paper).
The reason is that algorithm $B$ has a high probability of having a lower runtime, thus, it is easy that $B$ performs worse than $A$, while the opposite is difficult.
Failing to reject $H_0$ only indicates a lack of evidence against $H_0$, and in our context, it only indicates that there is not enough evidence to say that $B$ performs better than $A$ (it tells us nothing about $A$ performing better than $B$).

	\section{Conclusion}	
	\label{section:conclusion}

	Usually, comparing optimization algorithms with a maximum runtime as a common stopping criterion requires the algorithms to be executed in the same machine.
	Unfortunately, the code of all the algorithms is not always available.
	An alternative is to adjust the runtime of the algorithms relative to the speed of the machine in which they are executed.
	In this paper, we proposed a methodology to statistically compare the performance of two optimization algorithms in two different machines, when the results of one of the algorithms are already known and without having to execute this algorithm again.
	The methodology ensures that the probability of type I error does not increase due to the algorithms being executed in different machines.
	To achieve this, first, the runtime of the executed algorithm is adjusted based on the speed of the CPUs of both machines.
	Then, a modified one-sided sign test is used so that the probability of using an unfairly longer runtime is taken into account.
	We illustrate the application of the proposed methodology with two examples.
	
	Alongside this paper, a tutorial with examples is presented in our \href{https://github.com/EtorArza/RTDHW}{GitHub repository}.
	In addition, we offer two standalone scripts (also in the same repository): one to estimate the equivalent runtime and another one to generate the corrected p-values.
	This will hopefully make it simple for people to apply the proposed methodology.





\section*{Supplementary Material}

\emph{Code to Reproduce the Results}

The code to reproduce the results in this paper are available in the \href{https://github.com/EtorArza/RTDHW}{GitHub repository}.
	
\vspace{1em}

\emph{Scripts to Apply the Methodology}

The scripts \href{https://github.com/EtorArza/RTDHW/blob/master/equivalent_runtime.py}{equivalent\_runtime.py} and \href{https://github.com/EtorArza/RTDHW/blob/master/corrected_p_value.py}{corrected\_p\_value.py} required to apply the methodology are also available in the \href{https://github.com/EtorArza/RTDHW}{GitHub repository}.

\section*{Acknowledgments}	
	This work was funded in part by the Spanish Ministry of Science, Innovation and Universities through PID2019-1064536A-I00 and the BCAM Severo Ochoa excellence accreditation SEV-2017-0718; by Basque Government through consolidated groups 2019-2021 IT1244-19, ELKARTEK program and BERC 2018-2021 program; and by the Spanish Ministry of Economy and Competitiveness through the project TIN2017-82626-R.

\bibliographystyle{acm}
\bibliography{main}

\newpage

\appendix
\FloatBarrier

\section{The importance of using the same resources in algorithm comparison}
\label{appendix:extra_execution_time_typeI_error}
\begin{figure}
	\centering
	\caption*{\hspace{1.6em}Increase in the probability type I error with respect \\  \hspace{1.6em} to the difference in runtime}
	\vspace{-1em}
	\includegraphics[width=0.7\linewidth]{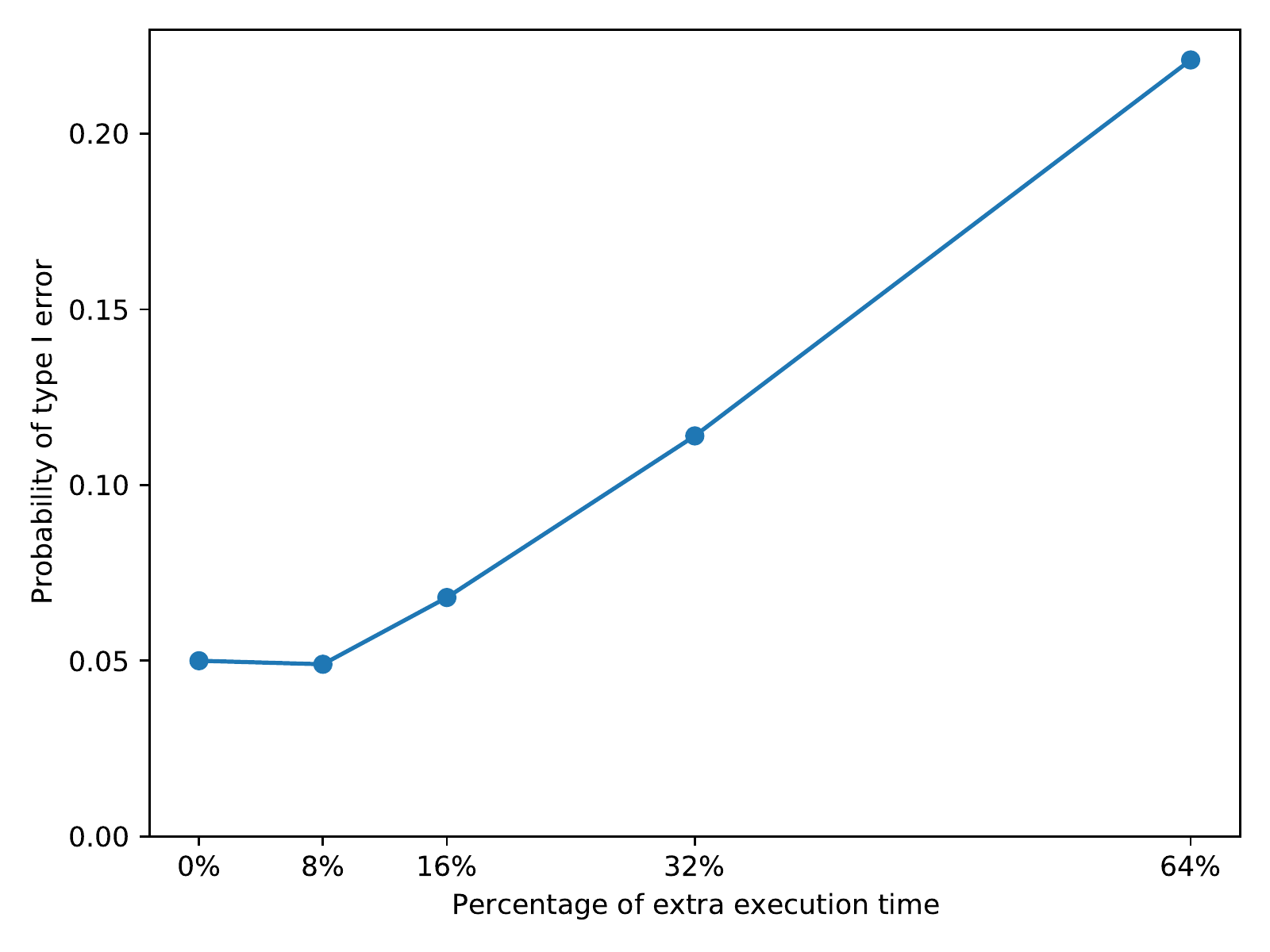}
	\caption{
		Probability of type I error in the one-sided sign test when comparing two identical random search algorithms.
		One of the algorithms is given extra runtime, according to the $x$-axis.		
		The test is applied to a set of 16 problem instances.
	}
	\label{fig:extra_execution_time_typeI_error}
\end{figure}

It is essential to run the algorithms with the same computational resources to carry out a fair comparison.
To better illustrate this point, in the following lines, a small experiment is presented.
This experiment illustrates the increase in the probability of type I error (the probability of erroneously concluding a difference in performance, when in reality, there is none) with respect to the difference in execution time.
Specifically, we run a random search algorithm twice in each problem instance\footnote{A set of 16 permutation problem instances is considered, 4 instances of 4 problems. The four permutation problems considered are the traveling salesman problem, the permutation flowshop scheduling problem, the linear ordering problem, and the quadratic assignment problem.} and perform the one-sided sign test~\cite{conover1980practical}\footnote{In \ref{appendix:why_sign_test}, we explain why we limit the statistical analysis to the sign test in this paper.} (see Section~\ref{section:corrected_critical_value} for an explanation of the sign test), on the set of results obtained.	
The significance level is set to $\alpha = 0.05$.
Even though the random search algorithm is being compared with itself, we increase the runtime of one of the executions by 8, 16, 32, or 64 percent.
We repeat the steps above $1000$ times to estimate the probability of type I error (estimated as the probability of rejecting $H_0$).

Figure~\ref{fig:extra_execution_time_typeI_error} shows the estimated probability of type I error.
Notice that the type I error starts at $0.05$, which is the expected result for a significance level of $\alpha = 0.05$.
However, the error shoots up dramatically when the difference in runtime increases, more than doubling when the percentage of extra runtime reaches 32\%.
Therefore, a discrepancy in the runtime of the algorithms being compared, if high enough, can lead to falsely concluding that the performance of the algorithms is not the same.
A fair comparison requires the same computational resources to be assigned in the execution of each algorithm.

\section{Justification of Assumption 1}
\label{appendix:constant_ratio_tasks}

	The runtime of an optimization process (a sequence of computational instructions) is different in each machine.
However, even though it is different, there might be a proportional relationship between the runtime of the same optimization process in two different machines.
This hypothesis is the basis of Assumption~\ref{assumption:k_constant_ratio}.

To experimentally study this assumption, we compute the correlation of the runtime that several optimization processes have on two machines. 
Specifically, we computed the correlation of 64 different optimization processes (see Appendix~\ref{appendix:experimentation_execution} for additional details on the optimization processes) for every possible pair of machines from the \nmachines\ different machines used in the experimentation.
The average Pearson's correlation coefficient of the runtimes is 0.989987, which shows a strong linear~\cite{zou2003correlation} (not necessarily proportional) relationship between the runtime of the same optimization process in two different machines.

\begin{figure}
	\centering
	\caption*{$\hspace{1.9em}$Estimating the equivalent runtime}
	\includegraphics[width=0.6\linewidth]{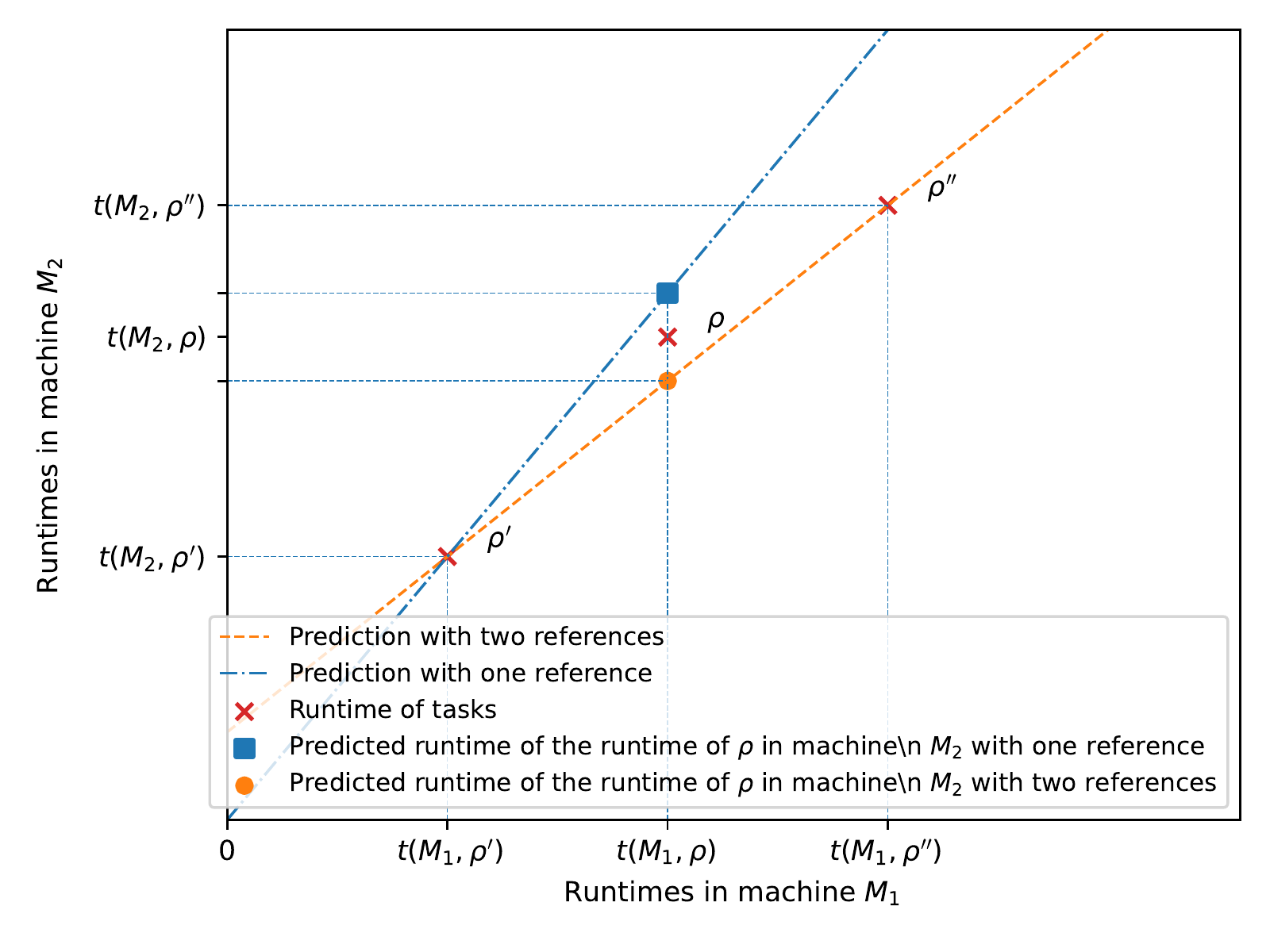}
	\caption{
		Estimation of the runtime of an optimization process $\rho$ with one $\rho'$ or two $\rho',\rho''$ reference optimization processes.
		The $x$-axis represents the runtime in machine $M_1$, while the y-axis is the runtime in machine $M_2$.
		The runtime of the optimization process $\rho$ is estimated for machine $M_2$.	
	}
	\label{fig:estimation_explained}
\end{figure}

Given two machines $M_1$ and $M_2$, the runtime of an optimization process can be considered as a two-dimensional vector, where each of the dimensions represents the runtime of the optimization process in each of the machines.
Thus, knowing the runtime $t(s,M_1)$ of an optimization process $\rho$ in a machine $M_1$, it is reasonable to estimate the equivalent runtime of $\rho$ in another machine $M_2$, when the runtime of two other optimization processes $\rho'$ and $\rho''$ is known for both machines.
In fact, with such a high Pearson's correlation coefficient, the runtimes of these optimization processes (red crosses in Figure~\ref{fig:estimation_explained}) will almost be aligned in a line~\cite{zou2003correlation}.
Therefore, the estimated runtime of $\rho$ in machine $M_2$ is defined as the value that makes the runtime of the three optimization processes aligned.
This is shown by the orange line in Figure~\ref{fig:estimation_explained}.

Observe that this procedure requires the runtime of two optimization processes $\rho'$ and $\rho''$ to be known in both machines $M_1,M_2$.
However, by considering an additional hypothesis, we can reduce the requirement to only one optimization process $\rho'$.
This additional hypothesis is that the regression line has to cross the origin.
Intuitively, if an optimization process (sequence of computational instructions) takes no time in a machine, it makes no sense that it takes a positive amount of time in another machine.
In addition, without this condition, it could be possible to estimate a negative runtime, which is not properly defined.
%
%

In this setting, the estimated runtime for the optimization process $\rho$ in machine $M_2$ is set so that the runtime of the optimization processes $\rho$ and $\rho'$ and the origin are in the same line.
This is represented by the blue line in Figure~\ref{fig:estimation_explained}. 
The estimation of the runtime of the optimization process $\rho$ in machine $M_2$, shown in the figure as a blue square, is given by the slope-intercept formula for the points $(0,0)$ and $(t(M_1,\rho'), t(M_2,\rho'))$:


\begin{equation}
\label{equation:k_constant_ratio_appendix}
t(M_2,\rho) \approx \frac{t(M_2,\rho')}{t(M_1,\rho')}t(M_1,\rho) 
\end{equation}

By rewriting Equation~\eqref{equation:k_constant_ratio_appendix}, we obtain that the ratio of two optimization processes is (approximately) constant in different machines

\begin{equation}
\label{equation:k_constant_ratio_appendix_mod}
\frac{t(M_2,\rho)}{t(M_2,\rho')} \approx \frac{t(M_1,\rho)}{t(M_1,\rho')} 
\end{equation}

which is exactly Assumption~\ref{assumption:k_constant_ratio}.

\section{Optimization processes}
\label{appendix:experimentation_execution}

As defined in this paper, an optimization process is just a sequence of computational instructions that can be executed in any machine.
Specifically, each of the optimization processes described in this section consists of executing an optimization algorithm in a problem instance for a maximum of $2\cdot10^6$ objective function evaluations.
In total, we considered 64 optimization processes, executing 4 algorithms in 16 problem instances.
The optimization process $\rho'$ is the sequential execution of these 64 optimization processes.

\textbf{Problem instances}:
We solved four types of optimization problems (all of them are permutation problems): the traveling salesman problem~\cite{Goldberg:1985:ATS:645511.657095}, the quadratic assignment problem~\cite{koopmansAssignmentProblemsLocation1957}, the linear ordering problem~\cite{ceberio_linear_2015} and the permutation flowshop scheduling problem~\cite{gupta_flowshop_2006}.
For each of these four problem types, we chose 4 problem instances, as listed in Table~\ref{table:instance_list}.

\textbf{Optimization algorithms}:
Each of the 16 problem instances was optimized with four optimization algorithms.
These optimization algorithms are random search and local search with three different neighborhoods: swap, interchange, and insert~\cite{schiavinottoReviewMetricsPermutations2007,ceberioReviewDistancesMallows2015}.
The local search is a best-first or greedy approach that is randomly reinitialized when a local optimum is found.

We define each of the 64 different optimization processes as running each of these four optimization algorithms in each of the 16 problem instances.

	\begin{table}[]
	\centering
	\begin{footnotesize}
		\caption*{\small \textbf{Problem instances}}
		\begin{tabular}{|l|c|c|}
			\hline
			\textbf{instance name}   & \textbf{problem} &     \textbf{size} \\ \hline
			tai75e02        &   qap   &       $\ $ 75 \\
			sko100a         &   qap   &      100 \\
			tai100a         &   qap   &      100 \\
			tai100b         &   qap   &      100 \\\hline
			eil101          &   tsp   &      101 \\
			pr136           &   tsp   &      136 \\
			kroA200         &   tsp   &      200 \\
			kroB200         &   tsp   &      200 \\\hline
			tai100\_20\_0   &  pfsp   & (100,20) \\
			tai100\_20\_1   &  pfsp   & (100,20) \\
			tai200\_20\_1   &  pfsp   & (200,20) \\
			tai200\_20\_1   &  pfsp   & (200,20) \\\hline
			N-be75np\_150   &   lop   &      150 \\
			N-stabu3\_150   &   lop   &      150 \\
			N-t65d11xx\_150 &   lop   &      150 \\
			N-t70f11xx\_150 &   lop   &      150 \\ \hline
		\end{tabular}
	\end{footnotesize}
	\caption
	{
		The list of 16 problem instances and their size.
	}
	\label{table:instance_list}
	
\end{table}

\textbf{Machines:}
The experimentation was carried out in a set of \nmachines\ different machines.
Table~\ref{table:machine_list} lists the CPU models of these machines, as well as their single-thread PassMark CPU scores.

	\begin{table}[]
	\centering
	\begin{footnotesize}
		\caption*{\small \textbf{Machines}}
		\begin{tabular}{lr}
			\textbf{CPU model name}    & \textbf{PassMark score} \\
			Intel i5 470U              &                  539 \\
			Intel Celeron N4100        &                 1012 \\
			AMD A9 9420 with Radeon R5 &                 1344 \\
			AMD FX 6300                &                 1486 \\
			Intel i7 2760QM            &                 1559 \\
			Intel i7 6700HQ (2.60GHz)  &                 1921 \\
			Intel i7 7500U             &                 1955 \\
			AMD Ryzen7 1800X           &                 2185 
		\end{tabular}
	\end{footnotesize}
	\caption
	{
		The list of \nmachines\ machines used in the experimentation and their speed score, measured in terms of PassMark single-thread score.
	}
	\label{table:machine_list}
	
\end{table}

\section{The sign test for algorithm performance comparison}
\label{appendix:why_sign_test}

When statistically assessing the comparison of the performance of optimization algorithms, a classical way is to use non-parametric tests as the distribution of the performance is usually unknown.
In the literature, the Wilcoxon signed-rank test, the Mann-Whitney test and the sign test~\cite{conover1980practical} are often used to assess a statistically significant difference in the performance of two algorithms.
We argue that, in the context of optimization algorithm performance comparison, it may be more suitable to use the sign test than the Wilcoxon signed-rank or the Mann-Whitney test.

It turns out that the result of the Wilcoxon and the Mann-Whitney tests might change when the objective function value of some of the problems is scaled (multiplied or divided by a positive constant).
The reason is that they both take into account the magnitude of the differences between the observations, and these differences change with scaling.
A usual solution is to consider the average relative deviation percentage with respect to the optimum (or any other reference solution) instead of the objective value, but this only changes the problem: now the results of these tests change when the objective function value of some of the problems is shifted (add or subtract a constant).
In our opinion, the performance comparison of two optimization algorithms should be invariant to these two alterations, otherwise, problems that are on a higher scale (for example, when the dimension of the problem is high), will have a larger impact on the result of the statistical test.
In addition, we believe that it is reasonable that all problem instances have the same weight in the conclusion of the statistical test, which both the  Wilcoxon signed-rank and the Mann-Whitney test are unable to accomplish due to their dependence on the magnitude of the differences.

An alternative is the sign test~\cite{conover1980practical}, which is invariant to the shifting and scaling of the problems.
In fact, the result of the sign test does not change even if some of the problems are modified by composing the objective function with any strictly increasing function.
For this reason, and even though the sign test is a less powerful alternative (higher probability of type II error), we believe it is the most suitable hypothesis test for algorithm performance comparison when the objective functions of all the problems are not directly comparable.

\pagebreak
\section{Proof of Equation~\eqref{equation:p_value_bound_sign_test}.}
\label{appendix:proof_inequality_p_corrected}
When performing the statistical analysis, a set of $n$ problem instances is used to compute the statistic and the \textit{p-}value.
The goal of the analysis is to draw conclusions on a larger set of problem instances based on the observed sample of size $n$.
%
%
%
%
%
Given a problem instance, we can define the performance of an algorithm in this instance.

\begin{mydef}
	\label{definition:performance_algorithm}
	(The performance of an algorithm in an instance) \\
	Let $M$ be a machine, $t$ a stopping criterion in terms of maximum runtime, $A$ an optimization algorithm and $i$ a problem instance. 
	The performance of algorithm $A$ in an instance $i$, denoted $A(M,t,i)$, is defined as a random variable whose outcome is obtained by first sampling a random seed $r$ and then optimizing instance $i$ with optimization algorithm $A$ in machine $M$ for time $t$.
	Given this random seed $r$, the performance of an algorithm in an instance is deterministic.
\end{mydef}

In Section~\ref{section:norm_exec_time}, we defined $t_1$ as the stopping criterion for algorithm $A$ in machine $M_1$, which is obviously the time it takes to carry out this optimization process in machine $M_1$.
We also defined the equivalent runtime $t_2$ as the time it takes to replicate the exact same optimization process in machine $M_2$ in Definition~\ref{definition:equivalent_runtime}.
Because of this definition, $A(M_1,t_1,i)$ and $A(M_2,t_2,i)$ are the same random variables.
Therefore, it makes sense to denote $A(M_1,t_1,i)$ and $A(M_2,t_2,i)$ or $B(M_1,t_1,i)$ and $B(M_2,t_2,i)$ as $A_i$ or $B_i$, respectively.
To ease the notation, we will also denote $B(M_2,\hat{t}_2,i)$ as $\hat{B}_i$.

Finally, as discussed in Section~\ref{section:limitations_sign_test}, we assume that whether $\hat{t}_2 < t_2$ is true or not is independent for each instance $i$, and that $\mathcal{P}(\hat{t}_2 < t_2) < p_{\gamma}$.
Let us now prove Equation~\eqref{equation:p_value_bound_sign_test}.


\begin{mylemma}
	\label{lemma:implication_implies_lower_prob_lower_count}
	Let $n$ be an integer, $X$ and $Y$ two random variables.
	Let $X_1,...,X_n$ be $n$ independent random variables distributed as $X$. 
	Let $Y_1,...,Y_n$ be $n$ independent random variables distributed as $Y$.
	Let $v_x$ and $v_y$ be two possible outcomes of the random variables $X$ and $Y$ respectively, $l \in \{0,...,n\}$ be an integer and $p \in (0,1)$ be a real number. 
	
	I) If $\mathcal{P}[Y = v_y \ | \ X = v_x] = 1$, then
	
	$$\mathcal{P}[X = v_x] \leq \mathcal{P}[Y = v_y]$$
	and
	$$\#\{X_i = v_x\} \leq \#\{Y_i = v_y\}$$
	\vspace{0.25cm}
	
	II) If $\mathcal{P}[Y = v_y \ | \ X = v_x] = 1$ and $\mathcal{P}[X = v_x \ | \ Y = v_y] = 1$ then
	
		$$\mathcal{P}[X = v_x] = \mathcal{P}[Y = v_y]$$\vspace{0.25cm}
	
%
	
	III) If $\mathcal{P}[X = v_x] < p$ then
	
	$$\mathcal{P}[\#\{X_i = v_x\} \geq l ] < \mathcal{P}[Bin(n,p) \geq l ]$$

\end{mylemma}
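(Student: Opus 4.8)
The plan is to handle the three parts in order, exploiting that part II is an immediate corollary of part I and that part III is essentially a self-contained fact about the binomial distribution.

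For part I, I would start from the chain $\mathcal{P}[X = v_x] = \mathcal{P}[X = v_x,\, Y = v_y] = \mathcal{P}[Y = v_y \mid X = v_x]\,\mathcal{P}[X = v_x]$, where the middle equality uses the hypothesis $\mathcal{P}[Y = v_y \mid X = v_x] = 1$ to conclude that, up to a null set, the event $\{X = v_x\}$ is contained in $\{Y = v_y\}$. Monotonicity of probability then gives $\mathcal{P}[X = v_x] = \mathcal{P}[X = v_x,\, Y = v_y] \le \mathcal{P}[Y = v_y]$. For the counting inequality I would read the hypothesis as specifying a coupling: the $X_i$ and $Y_i$ are realized jointly through the law of the pair $(X,Y)$, so the almost-sure inclusion $\{X = v_x\} \subseteq \{Y = v_y\}$ applies coordinate-wise to yield $\{i : X_i = v_x\} \subseteq \{i : Y_i = v_y\}$ as sets of indices, hence $\#\{X_i = v_x\} \le \#\{Y_i = v_y\}$ pointwise (almost surely).

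Part II is then immediate: applying part I once to the pair $(X,Y)$ and once to the pair $(Y,X)$ produces the two opposite inequalities $\mathcal{P}[X = v_x] \le \mathcal{P}[Y = v_y]$ and $\mathcal{P}[Y = v_y] \le \mathcal{P}[X = v_x]$, which together force equality. For part III I would first observe that, since the $X_i$ are independent and each equals $v_x$ with probability $q := \mathcal{P}[X = v_x] < p$, the count $\#\{X_i = v_x\}$ is distributed exactly as $Bin(n,q)$. The assertion thus reduces to stochastic monotonicity of the binomial in its success parameter, which I would prove by the standard uniform coupling: take $U_1,\dots,U_n$ i.i.d. on $(0,1)$, set $S_q = \sum_i \mathbf{1}[U_i \le q]$ and $S_p = \sum_i \mathbf{1}[U_i \le p]$, so that $S_q \sim Bin(n,q)$, $S_p \sim Bin(n,p)$, and $S_q \le S_p$ holds pointwise because $q < p$. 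This delivers the weak inequality $\mathcal{P}[S_q \ge l] \le \mathcal{P}[S_p \ge l]$ at once.

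The main obstacle is upgrading this to the strict inequality claimed in part III. I would get it by exhibiting a positive-probability outcome of the coupling on which $S_q < l \le S_p$: for $1 \le l \le n$, the event $\{U_1,\dots,U_{l-1} \le q,\ U_l \in (q,p],\ U_{l+1},\dots,U_n > p\}$ forces $S_q = l-1$ and $S_p = l$, and it carries positive probability whenever $p > q$; the degenerate case $q = 0$ is handled directly, since then $\mathcal{P}[S_q \ge l] = 0 < \mathcal{P}[S_p \ge l]$ for $l \ge 1$. Because this event contributes to $\{S_p \ge l\}$ but not to $\{S_q \ge l\}$, the inequality is strict. I would also flag that strictness genuinely requires $l \ge 1$: at $l = 0$ both survival probabilities equal one, so the claim as stated should be read with this understanding (which is all the later application needs).
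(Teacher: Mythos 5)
The paper never proves this lemma: it is the only statement in Appendix~\ref{appendix:proof_inequality_p_corrected} given without a proof, so there is no authors' argument to compare against and your proposal fills a gap left implicit. Your proof is correct, and it makes two points the paper glosses over. First, in part~I the counting inequality $\#\{X_i = v_x\} \leq \#\{Y_i = v_y\}$ is only meaningful under a coupling in which each pair $(X_i,Y_i)$ is jointly distributed as $(X,Y)$; with that reading the almost-sure inclusion $\{X_i = v_x\} \subseteq \{Y_i = v_y\}$ holds coordinate-wise and the inequality follows, whereas without it the claim is false (take $X_i$ independent of $Y_i$). This is the reading the paper implicitly uses in Lemmas~\ref{lemma:MinKV} and~\ref{lemma:bound_coef_of_theorem}, where all quantities are functions of the same underlying runs. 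Second, your treatment of part~III is sound: $\#\{X_i = v_x\}$ is exactly $Bin(n,q)$ with $q = \mathcal{P}[X=v_x] < p$, the uniform coupling gives the weak inequality, and your explicit event of probability $q^{l-1}(p-q)(1-p)^{n-l}$ (with the $q=0$ case handled separately) upgrades it to strict inequality for $l \geq 1$.

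Your flag about $l = 0$ is a genuine defect in the lemma as stated, not a quibble: at $l=0$ both sides equal $1$, so the strict inequality claimed for all $l \in \{0,\dots,n\}$ fails. This matters downstream, because the proof of Lemma~\ref{lemma:bound_coef_of_theorem} applies part~III with $l = \max(0,v-k)$, which is $0$ whenever $v \leq k$; in those cases the strict ``$<$'' of Lemma~\ref{lemma:bound_coef_of_theorem} (and of Equation~\eqref{equation:p_value_bound_sign_test}) should be ``$\leq$''. The final bound of Theorem~\ref{theorem:corrected_p_value_upper_bound}, which is stated with ``$\leq$'', survives unchanged, so the methodology is unaffected --- but your reading of part~III as holding strictly only for $l \geq 1$ is exactly right.
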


\begin{mylemma}
	\label{lemma:MinKV}
	Let $i \in \{1,..,n\}$ be $n$ problem instances and let $A$ and $B$ be two optimization algorithms.
	Let $a_i$, $b_i$ and $\hat{b}_i$ be the observed values of $A_i$, $B_i$ and $\hat{B}_i$ respectively, $\forall i \in \{1,...,n\}$.
	Let $k$ and $v \in \{0,...,n\}$ be two integers.
	Suppose that $A_i \neq B_i$ and $A_i \neq \hat{B}_i$.

	Then,
		
	$$\mathcal{P}[\#\{A_i<min(\hat{B}_i,B_i)\} \leq \min(k,v) \lwhere \#\{A_i < B_i\}=v] \leq $$
	$$\mathcal{P}[\#\{A_i>\hat{B}_i \land A_i < B_i\}  \geq \max(0,v-k) \lwhere \#\{A_i < B_i\}=v] $$

\end{mylemma}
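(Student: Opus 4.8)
The plan is to show that, after conditioning on $\#\{A_i < B_i\} = v$, the two events in the statement are in fact the \emph{same} event, so the two conditional probabilities coincide and the claimed inequality follows at once (indeed, with equality). The key observation is the elementary identity $A_i < \min(\hat{B}_i, B_i) \iff (A_i < \hat{B}_i) \land (A_i < B_i)$, so that $\#\{A_i < \min(\hat{B}_i, B_i)\} = \#\{A_i < \hat{B}_i \land A_i < B_i\}$.

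First I would restrict attention to the $v$ instances on which $A_i < B_i$; after conditioning there are exactly $v$ of them, since $A_i \neq B_i$ rules out ties. Both counts of interest vanish on the remaining $n - v$ instances: if $A_i > B_i$ then $A_i > \min(\hat{B}_i, B_i)$, and the event $A_i > \hat{B}_i \land A_i < B_i$ is impossible. Writing $X := \#\{A_i < \hat{B}_i \land A_i < B_i\}$ and $W := \#\{A_i > \hat{B}_i \land A_i < B_i\}$, the non-degeneracy assumption $A_i \neq \hat{B}_i$ forces each of these $v$ instances into exactly one of the two categories, so $X + W = v$ holds pathwise on the conditioning event.

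Next I would rewrite both events in terms of $X$. The left-hand event is $X \leq \min(k, v)$, while the right-hand event is $W \geq \max(0, v - k)$, that is, $v - X \geq \max(0, v - k)$. A short case split then finishes the argument: if $k \leq v$ then $\min(k, v) = k$ and $\max(0, v - k) = v - k$, and both events reduce to $X \leq k$; if $k > v$ then $\min(k, v) = v$ and $\max(0, v - k) = 0$, and both events reduce to the always-true statement $X \leq v$. In either case the two events coincide, hence the two conditional probabilities are equal, which in particular gives the stated $\leq$.

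The result is essentially a combinatorial identity rather than a probabilistic estimate, so I do not expect a genuine obstacle. The one point that deserves care is the bookkeeping underlying $X + W = v$: it relies jointly on the conditioning $\#\{A_i < B_i\} = v$ and the assumption $A_i \neq \hat{B}_i$, and it is precisely what lets the $\min$ and $\max$ truncations match up exactly in the case split. Since the whole argument is carried out pathwise on the conditioning event, the correlation between $B_i$ and $\hat{B}_i$ is irrelevant here and need not be controlled.
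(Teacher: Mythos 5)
Your proof is correct and takes essentially the same approach as the paper's: both rest on the identity $\#\{A_i<\min(\hat{B}_i,B_i)\}=\#\{A_i<\hat{B}_i \land A_i<B_i\}$, the decomposition $\#\{A_i<\hat{B}_i \land A_i<B_i\}+\#\{A_i>\hat{B}_i \land A_i<B_i\}=v$ (valid under the conditioning because $A_i\neq\hat{B}_i$), and the arithmetic $v-\min(k,v)=\max(0,v-k)$. The only difference is cosmetic: you observe that the two events in fact coincide, yielding equality of the conditional probabilities, whereas the paper proves only the one-sided pathwise implication it needs and then invokes its Lemma~\ref{lemma:implication_implies_lower_prob_lower_count}~I) to pass to probabilities.
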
	
\begin{proof}

	$$\#\{A_i<min(\hat{B}_i,B_i)\} \leq \min(k,v) \implies$$
	$$ \#\{A_i<\hat{B}_i \land A_i < B_i\} \leq \min(k,v) \implies$$
	$$ \#\{ A_i < B_i\} - \#\{A_i > \hat{B}_i \land A_i < B_i\} \leq \min(k,v) \implies$$

	Substituting $\#\{A_i < B_i\}=v$, 
	
	$$ v -  \min(k,v) \leq   \#\{A_i > \hat{B}_i \land A_i < B_i\} \implies$$
	
	Considering $v -  \min(k,v) = \max(0,v-k)$, 

	$$\#\{A_i > \hat{B}_i \land A_i < B_i\} \geq  \max(0,v-k)$$

	We have just shown that 
	
	$$\#\{A_i<min(\hat{B}_i,B_i)\} \leq \min(k,v) \implies \#\{A_i > \hat{B}_i \land A_i < B_i\} \geq  \max(0,v-k)$$
	
	Which means that, 
	
	$$\mathcal{P}[\#\{A_i > \hat{B}_i \land A_i < B_i\} \geq  \max(0,v-k) \lwhere \#\{A_i<min(\hat{B}_i,B_i)\} \leq \min(k,v)] = 1$$

	Finally, we apply Lemma~\ref{lemma:implication_implies_lower_prob_lower_count} I), obtaining

	$$\mathcal{P}[\#\{A_i<min(\hat{B}_i,B_i)\} \leq \min(k,v) \lwhere \#\{A_i < B_i\}=v] \leq $$
	$$\mathcal{P}[\#\{A_i>\hat{B}_i \land A_i <B_i\}  \geq \max(0,v-k) \lwhere \#\{A_i < B_i\}=v] $$

\end{proof}

\begin{mylemma}
	\label{lemma:bound_coef_of_theorem}
	Let $i \in \{1,..,n\}$ be $n$ problem instances and let $A$ and $B$ be two optimization algorithms.
	Let $a_i$, $b_i$ and $\hat{b}_i$ be the observed values of $A_i$, $B_i$ and $\hat{B}_i$ respectively, $\forall i \in \{1,...,n\}$.
	Let $k$ and $v \in \{0,...,n\}$ be two integers.
	Suppose that $A_i \neq B_i$ and $A_i \neq \hat{B}_i$.

	Then,
	
	$$\mathcal{P}[\#\{A_i<\hat{B}_i\} \leq k \lwhere \#\{A_i < B_i\}=v] < 	\mathcal{P}[Bin(n,p_{\gamma}) \geq \max(0,v-k)]$$
\end{mylemma}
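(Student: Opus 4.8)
The plan is to bound the target conditional probability by a short chain of inequalities that eliminates the unobservable quantities one at a time, finishing with an application of Lemma~\ref{lemma:implication_implies_lower_prob_lower_count} III). Throughout I would condition on the event $\{\#\{A_i<B_i\}=v\}$, write $m=\max(0,v-k)$, and assume this event has positive probability (under $H_0$ it does, since $\#\{A_i<B_i\}$ is $Bin(n,0.5)$). The first step is to pass from $\#\{A_i<\hat B_i\}$ to $\#\{A_i<\min(\hat B_i,B_i)\}$. Since $\min(\hat B_i,B_i)\le\hat B_i$ and $\min(\hat B_i,B_i)\le B_i$, the index set $\{i:A_i<\min(\hat B_i,B_i)\}$ is contained in both $\{i:A_i<\hat B_i\}$ and $\{i:A_i<B_i\}$; hence on the conditioning event $\#\{A_i<\min(\hat B_i,B_i)\}\le v$ always, and $\#\{A_i<\hat B_i\}\le k$ forces $\#\{A_i<\min(\hat B_i,B_i)\}\le\min(k,v)$. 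Applying Lemma~\ref{lemma:implication_implies_lower_prob_lower_count} I) in the conditional space and then Lemma~\ref{lemma:MinKV}, I would obtain
\[\mathcal P[\#\{A_i<\hat B_i\}\le k \lwhere \#\{A_i<B_i\}=v]\le\mathcal P[\#\{A_i>\hat B_i \land A_i<B_i\}\ge m \lwhere \#\{A_i<B_i\}=v].\]

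Next I would replace the joint count on the right by the count of the ``unfairly long runtime'' events. Writing $(\hat t_2>t_2)_i$ for the event that the estimated equivalent runtime exceeds the true one on instance $i$, the event $A_i>\hat B_i \land A_i<B_i$ implies $\hat B_i<B_i$ strictly; since a longer runtime can only lower the objective value, a strictly smaller objective for $\hat B_i$ than for $B_i$ forces $(\hat t_2>t_2)_i$. Thus each instance counted on the left is also counted by $(\hat t_2>t_2)_i$, giving $\#\{A_i>\hat B_i\land A_i<B_i\}\le\#\{(\hat t_2>t_2)_i\}$ pointwise, so the probability of exceeding $m$ can only grow. Invoking the modelling assumption of Section~\ref{section:limitations_sign_test} that the events $(\hat t_2>t_2)_i$ are independent of the performance comparison, conditioning on $\{\#\{A_i<B_i\}=v\}$ does not change their joint law, and this bound reduces to the unconditional quantity $\mathcal P[\#\{(\hat t_2>t_2)_i\}\ge m]$.

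Finally, the same assumption gives $\mathcal P[(\hat t_2>t_2)_i]<p_\gamma$ with the $n$ events independent, so Lemma~\ref{lemma:implication_implies_lower_prob_lower_count} III) with $p=p_\gamma$ and $l=m$ yields the strict inequality $\mathcal P[\#\{(\hat t_2>t_2)_i\}\ge m]<\mathcal P[Bin(n,p_\gamma)\ge m]=\mathcal P[Bin(n,p_\gamma)\ge\max(0,v-k)]$, and chaining the inequalities closes the argument. The main obstacle I anticipate is the middle step: cleanly justifying the implication $\{A_i>\hat B_i\land A_i<B_i\}\Rightarrow(\hat t_2>t_2)_i$ from the monotonicity-of-runtime assumption, and properly discharging the conditioning on $\{\#\{A_i<B_i\}=v\}$ via the independence assumption so that the final, unconditional form of Lemma~\ref{lemma:implication_implies_lower_prob_lower_count} III) applies.
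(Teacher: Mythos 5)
Your proposal is correct and follows essentially the same route as the paper's proof: the same chain $\#\{A_i<\hat{B}_i\} \to \#\{A_i<\min(\hat{B}_i,B_i)\}$ with the $\min(k,v)$ reduction, then Lemma~\ref{lemma:MinKV}, then the implication $A_i>\hat{B}_i \land A_i<B_i \Rightarrow \hat{t}_2>t_2$ via Lemma~\ref{lemma:implication_implies_lower_prob_lower_count}~I), and finally Lemma~\ref{lemma:implication_implies_lower_prob_lower_count}~III) with $p=p_\gamma$. Your version is in fact slightly cleaner in two places where the paper is loose: you make explicit the independence argument needed to drop the conditioning on $\{\#\{A_i<B_i\}=v\}$ before invoking part~III), and you state the bound with $p_\gamma$ throughout, where the paper's proof abruptly substitutes the numeric value $0.01$.
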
	
\begin{proof}
	$$\mathcal{P}[\#\{A_i<\hat{B}_i\} \leq k \lwhere \#\{A_i < B_i\}=v] \leq$$
	$$\mathcal{P}[\#\{A_i<min(\hat{B}_i,B_i)\} \leq k \lwhere \#\{A_i < B_i\}=v]$$

	Now, observe that $\#\{A_i < \min(B_i,\hat{B}_i)\} \leq \#\{A_i < B_i\} = v$, which implies that 
	
	$$\#\{A_i<min(\hat{B}_i,B_i)\} \leq k \iff \#\{A_i<min(\hat{B}_i,B_i)\} \leq min(k,v)$$

	This means that 
	
	$$ \mathcal{P}[\#\{A_i<min(\hat{B}_i,B_i)\} \leq k \lwhere \#\{A_i<min(\hat{B}_i,B_i)\} \leq min(k,v) \land \#\{A_i < B_i\}=v] = 1$$ and
	$$ \mathcal{P}[\#\{A_i<min(\hat{B}_i,B_i)\} \leq min(k,v) \lwhere \#\{A_i<min(\hat{B}_i,B_i)\} \leq k  \land \#\{A_i < B_i\}=v] = 1$$
	
	We apply Lemma~\ref{lemma:implication_implies_lower_prob_lower_count} II), obtaining
	
	$$\mathcal{P}[\#\{A_i<min(\hat{B}_i,B_i)\} \leq k \lwhere \#\{A_i < B_i\}=v] = $$
	$$\mathcal{P}[\#\{A_i<min(\hat{B}_i,B_i)\} \leq \min(k,v) \lwhere \#\{A_i < B_i\}=v]$$

	Applying Lemma~\ref{lemma:MinKV}, we obtain

	$$\mathcal{P}[\#\{A_i<min(\hat{B}_i,B_i)\} \leq \min(k,v) \lwhere \#\{A_i < B_i\}=v] \leq $$
	$$\mathcal{P}[\#\{A_i>\hat{B}_i \land A_i < B_i\}  \geq \max(0,v-k) \lwhere \#\{A_i < B_i\}=v] $$

	Note that $b_i$ is the score obtained with the true equivalent runtime $t_2$ as the stopping criterion, while in the case of $\hat{b}_i$, the stopping criterion is the estimated equivalent runtime $\hat{t}_2$.
	In a minimization context,  $\hat{b}_i < b_i \implies \hat{t_2} > t_2$, because a better score can only be obtained with a longer runtime (a shorter runtime implies an equal or worse performance).
	Let us consider the following implications: 
	$$a_i>\hat{b}_i \land a_i < b_i \implies \hat{b}_i < b_i \implies \hat{t_2} > t_2$$
	We infer that
	$$\mathcal{P}[\hat{t_2} > t_2 \lwhere A_i>\hat{B}_i \land A_i < B_i] = 1$$
	Applying Lemma~\ref{lemma:implication_implies_lower_prob_lower_count} I), we obtain  
	$$\mathcal{P}[\#\{A_i>\hat{B}_i \land A_i < B_i \lwhere \#\{A_i < B_i\}=v \} \geq \max(0,v-k)] \leq $$
	$$\mathcal{P}[\#\{\hat{t_2} > t_2 \lwhere \#\{A_i < B_i\}=v \} \geq \max(0,v-k)] = $$
	$$\mathcal{P}[\#\{\hat{t_2} > t_2\} \geq \max(0,v-k)]$$
	
	The estimated runtime $\hat{t}_2$ was computed with the equation in Definition~\ref{definition:estimation_execution_time} in Section~\ref{section:norm_exec_time}, with an estimated probability that $\hat{t}_2 < t_2$ lower than $0.01$.
	With this information, we apply Lemma~\ref{lemma:implication_implies_lower_prob_lower_count} III) taking into account that $\mathcal{P}[\hat{t}_2 > t_2] < 0.01$:

	$$\mathcal{P}[\#\{\hat{t_2} > t_2\} \geq \max(0,v-k)] < $$
	$$\mathcal{P}[Bin(n,0.01) \geq \max(0,v-k)]$$

\end{proof}

\begin{mytheorem}
	\label{theorem:corrected_p_value_upper_bound}
	Let $i \in \{1,..,n\}$ be $n$ problem instances and let $A$ and $B$ be two optimization algorithms.
	Let $a_i$, $b_i$ and $\hat{b}_i$ be the observed values of $A_i$, $B_i$ and $\hat{B}_i$ respectively, $\forall i \in \{1,...,n\}$.
	Let $H_0$ be the null hypothesis under which the statistic $\#\{A_i<B_i\}$ follows the null distribution $Bin(n,0.5)$.
	Suppose that $A_i \neq B_i$ and $A_i \neq \hat{B}_i$.
	Then,
	
	$$\mathcal{P}[\#\{A_i<\hat{B}_i\} \leq k \lwhere H_0] \leq $$
	$$\sum_{v=0}^{n} (1 - \mathcal{P}[Bin(n,0.01) < \max(0,v-k)]) \cdot \mathcal{P}[Bin(n,0.5) = v]$$

\end{mytheorem}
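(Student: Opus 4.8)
The plan is to reduce Theorem~\ref{theorem:corrected_p_value_upper_bound} to Lemma~\ref{lemma:bound_coef_of_theorem} by conditioning on the value of the statistic $\#\{A_i < B_i\}$ and applying the law of total probability. The role of $H_0$ in this reduction is minimal: it only fixes the marginal distribution of $\#\{A_i<B_i\}$ to be $Bin(n,0.5)$, while the conditional behavior of $\#\{A_i<\hat{B}_i\}$ given $\#\{A_i<B_i\}=v$ is governed entirely by the runtime-estimation mechanism of Definition~\ref{definition:estimation_execution_time} and is therefore unaffected by which hypothesis holds.

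First I would partition on the event $\{\#\{A_i<B_i\}=v\}$ for $v\in\{0,\dots,n\}$ and write
\begin{equation*}
\mathcal{P}[\#\{A_i<\hat{B}_i\} \leq k \lwhere H_0] = \sum_{v=0}^{n} \mathcal{P}[\#\{A_i<\hat{B}_i\} \leq k \lwhere \#\{A_i < B_i\}=v] \cdot \mathcal{P}[\#\{A_i<B_i\}=v \lwhere H_0].
\end{equation*}
Here the first factor no longer references $H_0$, precisely because conditioning on $\#\{A_i<B_i\}=v$ already pins down the quantity that $H_0$ constrains; this is the step that requires the most care, and I would justify it by noting that $H_0$ only reweights the outer sum and does not alter the conditional distribution of $\hat{B}_i$ relative to the comparison outcomes with $B_i$.

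Next I would invoke the meaning of $H_0$: under the null hypothesis the statistic $\#\{A_i<B_i\}$ is distributed as $Bin(n,0.5)$, so $\mathcal{P}[\#\{A_i<B_i\}=v \lwhere H_0]=\mathcal{P}[Bin(n,0.5)=v]$. Substituting this into the outer sum and bounding each inner factor by Lemma~\ref{lemma:bound_coef_of_theorem}, which supplies $\mathcal{P}[\#\{A_i<\hat{B}_i\} \leq k \lwhere \#\{A_i < B_i\}=v] < \mathcal{P}[Bin(n,p_{\gamma})\geq \max(0,v-k)]$ with $p_\gamma = 0.01$, yields
\begin{equation*}
\mathcal{P}[\#\{A_i<\hat{B}_i\} \leq k \lwhere H_0] \leq \sum_{v=0}^{n} \mathcal{P}[Bin(n,0.01)\geq \max(0,v-k)] \cdot \mathcal{P}[Bin(n,0.5)=v].
\end{equation*}
Since the summation weights $\mathcal{P}[Bin(n,0.5)=v]$ are nonnegative, the strict per-term bound aggregates to the (non-strict) inequality claimed in the statement.

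Finally I would rewrite the tail probability by complementation, using $\mathcal{P}[Bin(n,0.01)\geq m] = 1 - \mathcal{P}[Bin(n,0.01)< m]$ with $m=\max(0,v-k)$, which recovers exactly the asserted right-hand side. The only genuine obstacle is the conditioning identity in the first display, namely that dropping $H_0$ from the inner conditional probability is legitimate; everything downstream is a direct substitution of the two ingredients, the null marginal $Bin(n,0.5)$ and the per-value bound of Lemma~\ref{lemma:bound_coef_of_theorem}, followed by an elementary complementation.
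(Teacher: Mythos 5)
Your proposal is correct and follows essentially the same route as the paper's own proof: the law of total probability conditioned on $\#\{A_i<B_i\}=v$, the argument that $H_0$ can be dropped from the inner conditional because it only constrains the marginal of $\#\{A_i<B_i\}$, the application of Lemma~\ref{lemma:bound_coef_of_theorem} with $p_\gamma=0.01$, substitution of the $Bin(n,0.5)$ null distribution, and the final complementation. You correctly identified the conditioning/independence step as the delicate point, which is exactly the step the paper also singles out for justification.
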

\begin{proof}
	Let $X,C$ be a two random variables, where $S_C$ and $S_X$ are the sets of all possible outcomes of $C$ and $X$ respectively.
	Consider the law of total probability~\cite{beyer1991standard}: 
	
	$$\forall x \in S_X \ , \  \mathcal{P}[X = x] = \sum_{c \in S_C}  \mathcal{P}[C = c] \cdot \mathcal{P}[X = x \lwhere C = c]$$
	
	Applying this formula, we obtain
	
	$$\mathcal{P}[\#\{A_i<\hat{B}_i\} \leq k \lwhere H_0] = $$
	$$\sum_{v=0}^{n} \mathcal{P}[\#\{A_i<\hat{B}_i\} \leq k \lwhere H_0 \land \#\{A_i < B_i\}=v] \cdot \mathcal{P}[\#\{A_i < B_i\} = v \lwhere H_0]$$
	
	Given that $\#\{A_i < B_i\} = v$, we can say that $\#\{A_i<\hat{B}_i\} \leq k$ is independent of $H_0$, because $\#\{A_i<\hat{B}_i\}$ is determined by how many times $\hat{t}_2 > t_2$ resulted in $A_i < B_i \land A_i > \hat{B}_i$ and  $\hat{t}_2 < t_2$ resulted in $A_i > B_i \land A_i < \hat{B}_i$. 
	Specifically, $H_0$ gives the prior probabilities of $A_i > B_i$, which are not relevant when we know that $\#\{A_i > B_i\} = v$.
	That gives us	
	
	$$\sum_{v=0}^{n} \mathcal{P}[\#\{A_i<\hat{B}_i\} \leq k \lwhere H_0 \land \#\{A_i < B_i\}=v] \cdot \mathcal{P}[\#\{A_i < B_i\} = v \lwhere H_0] = $$
	$$\sum_{v=0}^{n} \mathcal{P}[\#\{A_i<\hat{B}_i\} \leq k \lwhere \#\{A_i < B_i\}=v] \cdot \mathcal{P}[\#\{A_i < B_i\} = v \lwhere H_0]$$
	
	Applying Lemma~\ref{lemma:bound_coef_of_theorem} and considering that $H_0$ implies the null distribution $Bin(n,0.5)$ for the statistic $\#\{A_i<B_i\}$,
	
	$$\sum_{v=0}^{n} \mathcal{P}[\#\{A_i<\hat{B}_i\} \leq k \lwhere \#\{A_i < B_i\}=v] \cdot \mathcal{P}[\#\{A_i < B_i\} = v \lwhere H_0] < $$
	$$\sum_{v=0}^{n} \mathcal{P}[Bin(n,0.01) \geq \max(0,v-k)] \cdot \mathcal{P}[\#\{A_i < B_i\} = v \lwhere H_0] = $$
	$$\sum_{v=0}^{n} \mathcal{P}[Bin(n,0.01) \geq \max(0,v-k)] \cdot \mathcal{P}[Bin(n,0.5) = v] = $$
	$$\sum_{v=0}^{n} (1 - \mathcal{P}[Bin(n,0.01) < \max(0,v-k)]) \cdot \mathcal{P}[Bin(n,0.5) = v]$$

\end{proof}

\end{document}